\def\idrm#1{\ensuremath{\mathrm{#1}}}
\def\floor#1{\lfloor #1 \rfloor}
\def\ceil#1{\lceil #1 \rceil}
\newcommand{\no}[1]{}
\newcommand{\todo}[1]{} 
\newtheorem{theorem}{Theorem}
\newtheorem{lemma}{Lemma}
\newtheorem{proposition}{Proposition}
\newenvironment{proof}{\trivlist\item[]\emph{Proof}:}%
{\unskip\nobreak\hskip 1em plus 1fil\nobreak$\Box$
\parfillskip=0pt%
\endtrivlist}
\newenvironment{itemize*}%
  {\begin{itemize}%
    \setlength{\itemsep}{0pt}%
    \setlength{\parskip}{0pt}%
    \setlength{\parsep}{0pt}%
    \setlength{\topsep}{0pt}%
    \setlength{\partopsep}{0pt}%
  }%
  {\end{itemize}}%
\newcommand{\cS}{{\cal S}}
\newcommand{\oT}{{\overline T}}
\newcommand{\ocT}{{\overline {\cal T}}}
\newcommand{\op}{{\overline p}}
\newcommand{\cT}{{\cal T}}
\newcommand{\oB}{{\overline B}}
\newcommand{\oP}{{\overline P}}
\newcommand{\eps}{\varepsilon}
\newcommand{\occ}{\mathrm{occ}}
\newcommand{\extendright}{\mathtt{extendright}}
\newcommand{\contractleft}{\mathtt{contractleft}}
\newcommand{\wlink}{\mathtt{wlink}}
\newcommand{\ra}{\idrm{rank}}
\newcommand{\sel}{\idrm{select}}
\newcommand{\acc}{\idrm{access}}
\newcommand{\lab}{\idrm{lab}}
\newcommand{\Acc}{\mathit{Acc}}
\newcommand{\longver}[1]{}
\begin{document}
\title{Space-Efficient Construction of Compressed Indexes in Deterministic Linear Time}
\author{
 J. Ian Munro\thanks{Cheriton School of Computer Science, University of Waterloo. Email {\tt imunro@uwaterloo.ca}.}
 \and
 Gonzalo Navarro\thanks{CeBiB --- Center of Biotechnology and Bioengineering, Department of Computer Science, University of Chile. Email {\tt gnavarro@dcc.uchile.cl}. Funded with Basal Funds FB0001, Conicyt, Chile.}
\and 
 Yakov Nekrich\thanks{Cheriton School of Computer Science, University of Waterloo.
 Email: {\tt yakov.nekrich@googlemail.com}.}
 }
\date{}
\maketitle

\thispagestyle{empty}
\begin{abstract}
We show that the compressed suffix array and the compressed suffix tree  of a string $T$ can be built in $O(n)$ deterministic time using $O(n\log\sigma)$  bits of space, where $n$ is the string length and $\sigma$ is the alphabet size.  Previously described  deterministic  algorithms either run in  time that depends on the alphabet size or need $\omega(n\log \sigma)$ bits of working space. Our result has immediate applications to other problems, such as yielding the first deterministic linear-time LZ77 and LZ78 parsing algorithms that use $O(n \log\sigma)$ bits.
\end{abstract}
\newpage
\setcounter{page}{1}

\section{Introduction}
\label{sec:intro}
In the string indexing problem we pre-process a string $T$, so that for any query string $P$ all occurrences of $P$ in $T$ can be found efficiently. Suffix trees and suffix arrays are two most popular solutions of this fundamental problem. A suffix tree is a compressed trie on  suffixes of $T$; it enables us to find 
all occurrences of a string $P$ in $T$ in time $O(|P|+\occ)$ where $\occ$ is the number of times $P$ occurs in $T$ and $|P|$ denotes the length of $P$. In addition to indexing, suffix trees also support a number of other, more sophisticated, queries.  
The suffix array of a string $T$ is the lexicographically sorted array of its suffixes. Although suffix arrays  do not support all queries that can be answered by the suffix tree, they  use less space and are more popular in practical implementations. While the suffix tree occupies $O(n\log n)$ bits of space, the suffix array can be stored in $n\log n$ bits.

During the last twenty years there has been a significant increase in interest in compressed indexes, i.e., data structures that keep $T$ in compressed form and support string matching queries. The compressed suffix array (CSA)~\cite{GrossiV05,FerraginaM05,Sad03} and the compressed suffix tree (CST)~\cite{Sadakane07} are compressed counterparts  of the suffix array and the suffix tree respectively.  A significant part of compressed indexes relies on these two data structures or their variants.   Both CSA and CST can be stored in $O(n\log \sigma)$ bits or less; we refer to e.g.~\cite{BelazzouguiN14} or~\cite{NM06} for an overview of compressed indexes.

It is well known that both the suffix array and the suffix tree can be constructed in $O(n)$ time~\cite{McCreight76,Ukkonen95,Weiner73, KSPP05}. The first algorithm 
that constructs the suffix tree in linear time independently of the alphabet size was presented by Farach~\cite{Farach97}.  There are also algorithms that directly construct the suffix array of $T$ in $O(n)$ time~\cite{KarkkainenSB06,KoA05}.
If the (uncompressed) suffix tree is available, we can obtain 
CST and CSA in $O(n)$ time. However this approach requires $O(n\log n)$ bits of working space. 
The situation is different if we want to construct compressed variants of these data structures using only $O(n\log \sigma)$ bits of space.  Within this space the algorithm of Hon et al.~\cite{HonSS09} constructs the CST in $O(n\log^{\eps}n)$ time for an arbitrarily small constant $\eps>0$. In the same paper the authors also showed that CSA can be constructed in $O(n\log\log \sigma)$ time. The algorithm of Okanohara and Sadakane constructs the CSA in linear time, but needs $O(n\log\sigma\log\log n)$ bits of space~\cite{OkanoharaS09}.  
Belazzougui~\cite{Belaz14} described randomized algorithms that build both CSA and CST in $O(n)$ time and $O(n\log\sigma)$ bits of space. His approach also provides deterministic algorithms with runtime $O(n\log\log \sigma)$~\cite{Belaz14arx}.   In this paper we show that randomization is not necessary in order to construct CSA and CST in linear time. Our algorithms run in $O(n)$ deterministic time and require $O(n\log\sigma)$ bits of space. 

Suffix trees, in addition to being an important part of many compressed indexes, also play an important role in many string algorithms.  One prominent example is Lempel-Ziv parsing of a string using $O(n\log\sigma)$ bits. The best previous solutions for this problem either take $O(n\log\log\sigma)$ deterministic time or $O(n)$ randomized time~\cite{KS16,BelazzouguiP16}. For instance K{\"o}ppl and Sadakane~\cite{KS16} showed how we can obtain LZ77- and LZ78-parsing for a string $T$ in $O(n)$ deterministic time and $O(n\log\sigma)$ bits, provided that the CST of $T$ is constructed. Thus our algorithm, combined with  their results, leads to the first  linear-time deterministic  LZ-parsing  algorithm that needs $O(n\log\sigma)$ bits of space. 

\paragraph{Overview.} 
The main idea of our approach is the use of batch processing.  Certain operations, such as rank and select queries on sequences, are a bottleneck of previous deterministic solutions. Our algorithms are divided into a large number of small tasks that can be executed independently.  Hence, we can collect large batches of queries and answer all queries in a batch. This approach speeds up the computation because, as will be shown later,  answering all queries in a batch takes less time than answering the same set of queries one-by-one. For example, our algorithm for generating the Burrows-Wheeler Transform of a text $T$ works as follows.  We cut the original text into slices of $\Delta=\log_{\sigma}n$ symbols.  The BWT sequence is constructed by scanning all slices in the right-to-left order. All slices are processed at the same time. That is, the algorithm  works in $\Delta$ steps and during the $j$-th step, for  $0\le j \le \Delta-1$,  we process all suffixes that start at position $i\Delta-j-1$  for all  $1\le i\le n/\Delta$.  Our algorithm maintains the sorted list of suffixes and keeps information about those suffixes in a symbol sequence $B$.   For every suffix $S_i=T[i\Delta -j-1 ..]$ processed during the step $j$, we must find its position in the sorted list of suffixes. Then the symbol $T[i\Delta-j-2]$ is inserted at the position that corresponds to $S_i$ in $B$.  Essentially we can find the position  of every new suffix $S_i$ by answering a rank query on the sequence $B$. Details are given in Section~\ref{sec:lintimebwt}.  Next  we must update the sequence  by inserting the new symbols into $B$.  Unfortunately we need $\Omega(\log n/\log \log n)$ time in general to answer rank queries on a dynamic sequence ~\cite{FS89}. Even if we do not have to update the sequence, we need $\Omega(\log\log \sigma)$ time to answer a rank query~\cite{BelazzouguiN15}.  In our case, however, the scenario is different: There is no need to answer queries one-by-one. We must provide answers to a large \emph{batch} of $n/\Delta$ rank queries with one procedure.  In this paper we show that the lower bounds for rank queries can be circumvented in the batched scenario: we can answer the  batch of  queries in $O(n/\Delta)$ time, i.e., in constant time per query. We also demonstrate that a batch of $n/\Delta$ insertions can be processed in $O(n/\Delta)$ time. This result is of independent interest. 
 
Data structures that answer batches of rank queries and support batched updates are  described in Sections~\ref{sec:batchrank},~\ref{sec:listlabel}, and~\ref{sec:batchdynseq}. This is the most technically involved aspect of our result.  In Section~\ref{sec:batchrank} we show how answers to a large batch  of queries can be provided. In Section~\ref{sec:listlabel} we describe a special labeling scheme that assigns monotonously increasing labels to elements of a list. We conclude this portion in Section~\ref{sec:batchdynseq} where we show how the static data structure can be dynamized.  Next we turn to the problem of constructing the compressed suffix tree.  First we describe a data structure that answers partial rank queries in 
constant time and uses $O(n\log\log \sigma)$ additional bits in Section~\ref{sec:partrank}; unlike previous solutions, our data structure can be constructed in $O(n)$ deterministic time.  This result is plugged into the algorithm of Belazzougui~\cite{Belaz14} to obtain  the suffix tree topology in $O(n)$ deterministic time. 
Finally we show how the permuted LCP array (PLCP) can be constructed in $O(n)$ time, provided we already built the suffix array and  the suffix tree topology; the algorithm  is described in Section~\ref{sec:permlcp}. Our algorithm for constructing PLCP is also based on batch processing of rank queries. To make this paper self-contained we provide some background on compressed data structures and indexes in Section~\ref{sec:prelim}.

We denote by $T[i..]$ the suffix of $T$ starting at position $i$ and we denote by $T[i..j]$ the substring of $T$ that begins with $T[i]$ and ends with $T[j]$, 
$T[i..]=T[i]T[i+1]\ldots T[n-1]$ and $T[i..j]=T[i]T[i+1]\ldots T[j-1]T[j]$. We assume that the text $T$ ends with a special symbol \$ and \$ lexicographically precedes all other symbols in $T$.  The alphabet size is $\sigma$ and symbols are integers in $[0..\sigma-1]$ (so \$ corresponds to $0$).  In this paper, as in the previous papers on this topic, we use the word RAM model of computation. A machine word consists of $\log n$ bits and we can execute standard bit operations, addition and subtraction in constant time. We will assume for simplicity that the alphabet size $\sigma\le n^{1/4}$. This assumption is not restrictive because for $\sigma> n^{1/4}$ linear-time algorithms that use $O(n\log\sigma)=O(n\log n)$ bits are already known.

\section{Linear Time Construction of the Burrows-Wheeler Transform}
\label{sec:lintimebwt}
In this section we show how the Burrows-Wheeler transform (BWT) of a text $T$ can be constructed in $O(n)$ time using $O(n\log \sigma)$ bits of space.
Let $\Delta=\log_{\sigma}n$. We can assume w.l.o.g. that the text length is divisible by $\Delta$ (if this is not the case we can pad the text $T$ with $\ceil{n/\Delta}\Delta - n$ \$-symbols). The BWT of $T$ is a sequence $B$ defined as follows:
 if $T[k..]$ is the $(i+1)$-th lexicographically smallest suffix, then $B[i]=T[k-1]$\footnote{So $B[0]$ has the lexicographically smallest suffix ($i+1=1$) and
so on. The exact formula is $B[i]=T[(k-1) \mathrm{mod}\, n]$. We will write $B[i]=T[k-1]$ to avoid tedious details.}.  Thus the symbols of $B$ are the symbols that precede the  suffixes of $T$, sorted in lexicographic order. We will say that $T[k-1]$ \emph{represents} the suffix $T[k..]$ in $B$. 
Our algorithm divides the suffixes of $T$ into $\Delta$ classes and constructs   $B$ in $\Delta$ steps. We say that a suffix $S$ is a $j$-suffix for  $0\le j< \Delta$ if $S=T[i\Delta-j-1..]$ for some $i$, and denote by $\cS_j$ the set of all $j$-suffixes, $\cS_j=\{\,T[i\Delta-j-1..]\,|\, 1\le i\le n/\Delta\,\}$.  During the $j$-th step we process all $j$-suffixes and insert symbols representing  $j$-suffixes at appropriate positions of the sequence $B$. 

\paragraph{Steps $0-1$.} 
We sort suffixes in $\cS_0$ and $\cS_1$ by constructing a new text and representing it as a sequence of $n/\Delta$ meta-symbols. Let 
$T_1=T[n-1]T[0]T[1]\ldots T[n-2]$ be the text $T$ rotated by one symbol to the right  and let $T_2=T[n-2]T[n-1]T[0]\ldots T[n-3]$ be the text obtained by rotating $T_1$ one symbol to the right. We represent $T_1$ and $T_2$ as  sequences of length $n/\Delta$ over meta-alphabet $\sigma^{\Delta}$ (each meta-symbol corresponds to a string of length $\Delta$). Thus we view $T_1$ and $T_2$ as $$T_1=\boxed{T[n-1]\ldots T[\Delta-2]}\boxed{T[\Delta-1]\ldots T[2\Delta-2]}\boxed{T[2\Delta-1]\ldots T[3\Delta-2]}\boxed{T[3\Delta-1]\ldots }\ldots $$
$$T_2=\boxed{T[n-2]\ldots T[\Delta-3]}\boxed{T[\Delta-2]\ldots T[2\Delta-3]}\boxed{T[2\Delta-2]\ldots T[3\Delta-3]}\boxed{T[3\Delta-2]\ldots}\ldots $$

 Let $T_3=T_1\circ T_2$ denote the concatenation of $T_1$ and $T_2$.  To sort the suffixes of $T_3$, we sort the meta-symbols of $T_3$ and rename them with their ranks. Since meta-symbols correspond to $(\log n)$-bit integers, we can sort them in time $O(n)$ using radix sort.  Then we apply  a linear-time and linear-space suffix array construction algorithm~\cite{KarkkainenSB06} to  $T_3$.   We thus obtain a sorted list of suffixes $L$ for the meta-symbol sequence $T_3$. 
Suffixes of $T_3$ correspond to the suffixes from $\cS_0\cup \cS_1$ in the original text $T$: the suffix $T[i\Delta-1..]$ corresponds to the suffix of $\cS_0$ starting with meta-symbol $\boxed{T[i\Delta-1]T[i\Delta]\ldots}$ in $T_3$ and the suffix $T[i\Delta-2\ldots]$ corresponds to the suffix of $\cS_1$ starting with $\boxed{T[i\Delta-2]T[i\Delta-1]\ldots}$.  Since we assume that the special symbol \$ is smaller than all other symbols, this correspondence is order-preserving.  Hence by sorting the suffixes of $T_3$  we obtain the sorted list  $L'$ of suffixes in $\cS_0\cup \cS_1$.
\no{ $L$ contains the same suffixes as  $\cS_1\cup \cS_0$ with only four exceptions: Two suffixes of $T_3$, ones that start with $T[n-1]T[0]\ldots$ and $T[n-2]T[n-1]T[0]\ldots$, are not in $\cS_1\cup \cS_0$.  We can remove  two unneeded suffixes from $L$ in $O(1)$ time.  Besides two rightmost suffixes of $T$, $S_1=T[n-1]$ and $S_2=T[n-2]T[n-1]$, are not in $L$.    Since $S_1$ and $S_2$ consist of $O(1)$ symbols, we can find their positions and insert them into $L$ in $O(\log n)$ time by binary search in $L$.}  Now we are ready to insert symbols representing $j$-suffixes into $B$:  Initially $B$ is empty. Then  the list  $L'$ is traversed and  for every suffix $T[k..]$ that appears in $L'$ we add the symbol $T[k-1]$ at the end of $B$.

When suffixes in $\cS_0$ and $\cS_1$ are processed, we need to record some information for the next step of our algorithm.  
For every suffix  $S\in \cS_1$ we keep its position in the sorted list of suffixes. The position of suffix  $T[i\Delta-2..]$ is stored in the entry $W[i]$ of an auxiliary array $W$, which at the end of the $j$-th step will contain the
positions of the suffixes $T[i\Delta-j-1..]$. 
We also keep an auxiliary array $\Acc$ of size $\sigma$: $\Acc[a]$ is equal to the number of occurrences of symbols $i\le a-1$ in the current sequence $B$.

\paragraph{Step $j$ for $j\ge 2$.} 
Suppose that suffixes from $\cS_0$, $\ldots$, $\cS_{j-1}$ are already processed.
The symbols that precede suffixes from these sets are stored in the sequence $B$; the $k$-th symbol $B[k]$  in $B$ is the symbol that precedes the $k$-th lexicographically smallest  suffix from $\cup_{t=0}^{j-1}\cS_t$.  For every suffix $T[i\Delta-j..]$, we know its position $W[i]$ in $B$. Every suffix  
$S_i=T[i\Delta-j-1..]\in \cS_j$ can be represented as $S_i=aS'_i$ for some symbol $a$ and the suffix $S'_i=T[i\Delta-j..]\in \cS_{j-1}$. We look up the position $t_i=W[i]$ of $S'_i$ and answer rank query $r_i=\ra_a(t_i,B)$. We need $\Omega(\log\frac{\log \sigma}{\log\log n})$ time to answer a single rank query on a static sequence~\cite{BelazzouguiN15}.  If updates are to be supported, then we need $\Omega(\log n/\log \log n)$ time to answer such a query~\cite{FS89}.  However in our case the scenario  is different: we perform a \emph{batch} of $n/\Delta$ queries  to sequence $B$, i.e., we have to find $r_i$ for \emph{all} $t_i$. During Step $2$ the number of queries is equal to $|B|/2$ where $|B|$ denotes the number of symbols in $B$. During step $j$ the number of queries is $|B|/j\ge |B|/\Delta$.  We will show in  Section~\ref{sec:batchrank} that such a large  batch of rank queries  can be answered in $O(1)$ time per query. Now we can find  the rank  $p_i$ of $S_i$ among $\cup_{t=1}^{j}\cS_t$: there are exactly $p_i$ suffixes in  $\cup_{t=1}^{j}\cS_t$ that are smaller than $S_i$, where $p_i=\Acc[a]+r_i$. Correctness of this computation can be proved as follows.
\begin{proposition}
  \label{prop:sufrank}
Let $S_i=aS'_i$ be an arbitrary suffix from the set $\cS_j$.
For every occurrence of a symbol $a'<a$ in the sequence $B$, there is exactly one suffix $S_p< S_i$ in $\cup_{t=1}^j \cS_t$, such that $S_p$ starts with $a'$. Further, there are exactly $r_i$ suffixes $S_v$ in $\cup_{t=1}^j \cS_t$ such that $S_v\le S_i$ and $S_v$ starts with $a$.
\end{proposition}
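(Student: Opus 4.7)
The plan is to exploit a shift-by-one bijection between $\cup_{t=0}^{j-1}\cS_t$ and $\cup_{t=1}^{j}\cS_t$. By definition, position $k$ of the current sequence $B$ stores the symbol $T[m-1]$ that precedes the $(k+1)$-th lexicographically smallest suffix $T[m..]$ of $\cup_{t=0}^{j-1}\cS_t$. If $T[m..]\in\cS_t$ then $m=i'\Delta-t-1$, so $m-1=i'\Delta-(t+1)-1$ and $T[m-1..]\in\cS_{t+1}\subseteq \cup_{t=1}^{j}\cS_t$. The map $T[m..]\mapsto T[m-1..]$ is therefore a bijection between the two families, and an occurrence of a symbol $c$ at position $k$ of $B$ corresponds to the unique suffix $c\cdot T[m..]\in \cup_{t=1}^{j}\cS_t$ that starts with $c$.

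For the first claim I would restrict this bijection to occurrences of symbols $a'<a$: each such occurrence yields exactly one suffix in $\cup_{t=1}^{j}\cS_t$ starting with $a'$, and since $S_i$ starts with $a>a'$ every such suffix is lexicographically smaller than $S_i$. Conversely, any $S_p\in \cup_{t=1}^{j}\cS_t$ beginning with $a'$ arises from exactly one occurrence of $a'$ in $B$, so the counting is exact.

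For the second claim I would use the defining property of $t_i=W[i]$: the first $t_i$ positions of $B$ correspond precisely to the suffixes of $\cup_{t=0}^{j-1}\cS_t$ that are $\le S'_i$. The positions among these that hold the symbol $a$ are counted by $r_i=\ra_a(t_i,B)$, and the bijection sends each such position to a suffix $aS'_v$ with $S'_v\le S'_i$, i.e.\ $S_v\le S_i$. Conversely, any $S_v\in \cup_{t=1}^{j}\cS_t$ that starts with $a$ and satisfies $S_v\le S_i$ must be of the form $aS'_v$ with $S'_v\in\cup_{t=0}^{j-1}\cS_t$ and $S'_v\le S'_i$, so its preimage lies in the first $t_i$ positions of $B$ and contributes exactly one to $r_i$.

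The only subtle step is checking that the shift-by-one bijection is order-preserving on pairs of suffixes that share a first symbol, which reduces to the trivial observation that $aS'_u\le aS'_v$ iff $S'_u\le S'_v$. The rest is bookkeeping, and combining the two statements yields the formula $p_i=\Acc[a]+r_i$ for the rank of $S_i$ within $\cup_{t=1}^{j}\cS_t$, exactly as used in the algorithm.
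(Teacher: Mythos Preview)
Your proof is correct and follows essentially the same approach as the paper's own proof: both exploit the correspondence between symbols stored in $B$ (which represent suffixes in $\cup_{t=0}^{j-1}\cS_t$) and the suffixes in $\cup_{t=1}^{j}\cS_t$ obtained by prepending that symbol. You make the bijection and both directions of each count more explicit than the paper does, but the underlying argument is identical.
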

\begin{proof}
  Suppose that  a suffix $S_p$ from $\cS_t$, such that $j\ge t\ge 1$, starts with $a'<a$. Then $S_p=a'S'_p$ for some $S'_p\in \cS_{t-1}$. By definition of the sequence $B$, 
there is exactly one occurrence of $a'$ in $B$ for every such $S'_p$. Now suppose that a suffix $S_v\in \cS_t$, such that $j\ge t\ge 1$, starts with $a$ and $S_v\le  S_i$. 
Then $S_v=aS'_v$ for $S'_v\in \cS_{t-1}$ and $S'_v\le  S'_i$. For every such $S'_v$ there is exactly one occurrence of the symbol $a$ in $B[1..t_i]$, where $t_i$ is the 
position of $S'_i$ in $B$. 
\end{proof}
The above calculation did not take into account the suffixes from $\cS_0$. We compute the number of suffixes $S_k\in \cS_0$ such that $S_k<S_i$ using the approach of Step $0-1$. Let $T_1$ be the text obtained by rotating $T$ one symbol to the right. Let $T'$ be the text obtained by rotating $T$ $j+1$ symbols to the right. We can sort suffixes of $\cS_0$ and $\cS_j$ by concatenating $T_1$ and $T'$, viewing the resulting text $T''$ as a sequence of $2n/\Delta$ meta-symbols and constructing the suffix array for $T''$. When suffixes in $\cS_0\cup \cS_j$ are sorted, we traverse the sorted list of suffixes;  for every suffix $S_i\in \cS_j$ we know the number $q_i$ of lexicographically smaller suffixes from $\cS_0$.

We then modify the sequence $B$: 
We sort new suffixes $S_i$ by $o_i=p_i+q_i$. Next we insert the symbol $T[i\Delta -j-1]$ at position $o_i-1$ in $B$ (assuming the first index of $B$ is $B[0]$); insertions are performed in increasing order of $o_i$. We will show that this procedure also takes $O(1)$ time per update for a large batch of insertions. Finally we record the position of every new suffix from $\cS_j$ in the sequence $B$. Since the positions of suffixes from $\cS_{j-1}$ are not needed any more, we use the entry $W[i]$ of $W$ to store the position of $T[i\Delta-j-1..]$. The array $\Acc$ is also updated.

When Step $\Delta-1$ is completed, the sequence $B$ contains $n$ symbols and $B[i]$ is the symbol that precedes the $(i+1)$-th smallest suffix of $T$. Thus we obtained the BWT of $T$.  Step $0$ of our algorithm uses $O((n/\Delta)\log n)=O(n\log \sigma)$ bits. For all the following steps we need to maintain the sequence $B$ and the array $W$. $B$ uses $O(\log\sigma)$ bits per symbol and $W$ needs $O((n/\Delta)\log n)=O(n\log\sigma)$ bits. Hence our algorithm uses $O(n\log \sigma)$ bits of workspace. Procedures for querying and updating $B$ are described in the following section. Our result can be summed up as follows. 
\begin{theorem}
  \label{theor:bwt}
Given a string $T[0..n-1]$ over an alphabet of size $\sigma$, we can construct 
the BWT of $T$ in $O(n)$ deterministic time using $O(n\log\sigma)$ bits. 
\end{theorem}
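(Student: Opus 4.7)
The plan is to verify three things in turn: correctness of the incremental construction of $B$, a total time budget of $O(n)$, and a total space budget of $O(n\log\sigma)$. Correctness follows by induction on $j$, with the invariant that at the end of step $j$ the sequence $B$ is the sorted list of predecessor symbols of $\bigcup_{t=0}^{j}\cS_t$, the array $W$ stores the positions in $B$ of the suffixes $T[i\Delta-j-1..]$, and $\Acc[a]$ counts the occurrences of symbols below $a$ in $B$. The base case $j=1$ is direct from the meta-symbol construction on $T_3 = T_1\circ T_2$: since \$ is the smallest symbol, the lexicographic order on suffixes of the meta-symbol sequence coincides with the order on the corresponding suffixes of $T$ in $\cS_0\cup\cS_1$, and the traversal of $L'$ deposits each representing symbol in its correct place. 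For the inductive step, Proposition~\ref{prop:sufrank} already gives $p_i=\Acc[a]+r_i$ as the count of suffixes in $\bigcup_{t=1}^{j}\cS_t$ strictly smaller than $S_i$; the meta-symbol sort that merges $\cS_0$ and $\cS_j$ supplies the missing count $q_i$ of suffixes from $\cS_0$ below $S_i$, so inserting $T[i\Delta-j-1]$ at position $o_i=p_i+q_i$ preserves the invariant.

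For the time budget I would enumerate the cost of each of the $\Delta=\log_\sigma n$ steps. Steps $0$-$1$ radix-sort $2n/\Delta$ meta-symbols (each a $\log n$-bit integer) in $O(n/\Delta)$ time and then run a linear-time suffix array construction on a string of length $2n/\Delta$ in $O(n/\Delta)$ time. Each subsequent step $j\ge 2$ performs (i) the analogous meta-symbol construction on a concatenation of two rotations of $T$ to sort $\cS_0\cup\cS_j$ in $O(n/\Delta)$ time, (ii) a batch of at most $n/\Delta$ rank queries on $B$ answered in $O(n/\Delta)$ total time by the batched-rank machinery of Section~\ref{sec:batchrank}, (iii) a batch of $n/\Delta$ insertions into $B$ executed in $O(n/\Delta)$ total time by the dynamic batched structure of Section~\ref{sec:batchdynseq}, and (iv) $O(n/\Delta)$ bookkeeping work to refresh $W$ and $\Acc$. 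Summing over all $\Delta$ steps yields $\Delta\cdot O(n/\Delta)=O(n)$.

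For the space budget, $B$ holds at most $n$ symbols of $\log\sigma$ bits, $W$ holds $n/\Delta$ machine words of $\log n$ bits (both $O(n\log\sigma)$ bits in total), and $\Acc$ uses $O(\sigma\log n)=O(n\log\sigma)$ bits under the standing assumption $\sigma\le n^{1/4}$. The transient structures built during each meta-symbol step --- the rotated text, the radix-sort buckets, and the intermediate suffix array of $T_3$ or $T''$ --- also fit in $O(n\log\sigma)$ bits and are reclaimed at the end of the step. The main subtlety in the plan is the composability of the batched primitives across successive steps: the positions in $W$ recorded at the end of step $j-1$ must still point to the correct logical suffixes after the $n/\Delta$ insertions of step $j$ have shifted elements of $B$. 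This is exactly what the list-labeling scheme of Section~\ref{sec:listlabel} is designed to provide, so the proof reduces to checking that $W$ is re-labeled consistently between consecutive steps, using values returned by the insertion primitive of Section~\ref{sec:batchdynseq}.
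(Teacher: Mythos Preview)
Your proposal is correct and follows the paper's own argument closely: Section~\ref{sec:lintimebwt} is itself the proof of Theorem~\ref{theor:bwt}, and you have accurately reconstructed its three pieces (invariant plus Proposition~\ref{prop:sufrank} for correctness, $\Delta\cdot O(n/\Delta)=O(n)$ for time, and the accounting of $B$, $W$, $\Acc$, and the per-step meta-symbol machinery for space).

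One small clarification about your final paragraph. The ``subtlety'' you flag is not actually present in the algorithm as stated. The entries $W[i]$ written at the end of step~$j-1$ are consumed at the \emph{beginning} of step~$j$, \emph{before} any insertions of step~$j$ take place; hence they are absolute positions in the current $B$ and are valid as written. After the batch of insertions, $W[i]$ is simply overwritten with the final position $o_i$ of the newly inserted $j$-suffix (which is known directly, since $o_i$ already counts all suffixes in $\bigcup_{t=0}^{j}\cS_t$ that precede $S_i$). No re-labelling of $W$ is needed, and the list-labelling scheme of Section~\ref{sec:listlabel} is purely internal to the dynamic sequence structure of Section~\ref{sec:batchdynseq}; it is not exposed to, nor needed by, the array~$W$.
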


\section{Batched Rank Queries on a Sequence}
\label{sec:batchrank}
In this section we show how a batch  of $m$ rank queries for $\frac{n}{\log^2 n} \le m \le n$ can be answered in $O(m)$ time on a sequence $B$ of length $n$. 
We start by describing a static data structure. A data structure that supports  batches of queries and batches of insertions will be described later. 
We will assume $\sigma\ge \log^4n$; 
if this is not the case, the data structure from~\cite{FMMN07} can be used to answer rank queries in time $O(1)$.   

Following previous work \cite{GolynskiMR06}, we divide $B$ into chunks of size $\sigma$ (except for the last chunk that contains at most $\sigma$ symbols). For every symbol $a$ we keep a binary sequence  $M_a=1^{d_1}01^{d_2}0\ldots 1^{d_f}$ where $f$ is the total number of chunks and $d_i$ is the number of occurrences of $a$ in the  chunk.  We keep the following information for every chunk $C$.  Symbols in a chunk $C$ are represented as pairs $(a,i)$: we store a pair $(a,i)$ if and only if $C[i]=a$. These pairs are sorted by symbols and pairs representing the same symbol $a$ are sorted by their positions in $C$; all sorted  pairs from a chunk are kept in a sequence $R$. The array $F$ consists of $\sigma$ entries; $F[a]$ contains a pointer to the first occurrence of a symbol $a$ in $R$ (or {\em null} if $a$ does not occur in $C$). 
Let $R_a$ denote the subsequence of $R$ that contains all pairs $(a,\cdot)$ for some symbol $a$. If $R_a$ contains at least $\log^2 n$ pairs, we split $R_a$ into groups $H_{a,r}$ of size $\Theta(\log^2 n)$. For every group, we keep its first pair in the sequence $R'$. Thus $R'$ is also a subsequence of $R$.  For each pair $(a',i')$ in $R'$ we also store the partial rank of $C[i']$ in $C$, $\ra_{C[i']}(i',C)$.

All pairs in $H_{a,r}$ are kept in a data structure $D_{a,r}$ that contains  the second components of pairs $(a,i)\in H_{a,r}$. Thus $D_{a,r}$ contains positions of $\Theta(\log^2 n)$ consecutive symbols $a$. If $R_a$ contains less than $\log^2 n$ pairs, then we keep all pairs starting with symbol $a$ in one group $H_{a,0}$. Every $D_{a,r}$ contains $O(\log^2 n)$ elements. Hence we can implement $D_{a,r}$ so that predecessor queries are answered in constant time: for any integer $q$, we can find the largest $x\in H_{a,r}$ satisfying $x\le q$ in $O(1)$ time~\cite{FW94}.  We can also find the number of elements $x\in H_{a,r}$ satisfying $x\le q$ in $O(1)$ time. This operation on $H_{a,r}$ can be implemented using bit techniques  similar to those suggested in~\cite{NavarroN13}; details are to be given in the full version of this paper.

\paragraph{Queries on a Chunk.}
Now we are ready to answer a batch of queries in $O(1)$ time per query.  First we describe how queries on a chunk can be answered. Answering a query $\ra_a(i,C)$ on a chunk $C$ is equivalent to counting the number of pairs $(a,j)$ in $R$ such that $j\le i$. 
Our method works in three steps. We start by sorting the sequence of all queries on $C$.  Then we ``merge'' the sorted query sequence with $R'$. That is, we find for every $\ra_a(i,C)$ the rightmost pair $(a,j')$ in $R'$, such that $j'\le i$. Pair $(a,j')$ provides us with an approximate answer to $\ra_a(i,C)$ (up to an additive $O(\log^2 n)$ term). Then we obtain the exact answer to each query by searching in some data structure $D_{a,j}$. Since $D_{a,j}$ contains only $O(\log^2 n)$ elements, the search can be completed in $O(1)$ time.  A more detailed description follows.

 Suppose that we must answer $v$ queries $\ra_{a_1}(i_1,C)$, $\ra_{a_2}(i_2,C)$, $\ldots$, $\ra_{a_v}(i_v,C)$ on a chunk $C$. We sort the sequence of queries by pairs $(a_j,i_j)$ in increasing order. This sorting step takes $O(\sigma/\log^2 n + v)$ time, where $v$ is the number of queries: if $v<\sigma/\log^3n$, we sort in $O(v\log n)=O(\sigma/\log^2 n)$ time; if $v\ge \sigma/\log^3 n$, we sort in $O(v)$ time using radix sort (e.g., with radix $\sqrt{\sigma}$). Then we simultaneously traverse  the sorted sequence of queries and $R'$; for each query pair $(a_j,i_j)$ we identify the pair 
$(a_t,p_t)$ in $R'$ such that either (i) $p_t\le i_j\le p_{t+1}$ and $a_j=a_t=a_{t+1}$ or (ii) $p_t\le i_j$, $a_j=a_t$, and $a_t\not=a_{t+1}$.   That is, we find the largest 
$p_t\le i_j$  such that $(a_j,p_t)\in R'$ for every query pair $(a_j,i_j)$. 
If  $(a_t,p_t)$ is found, we search in the group $H_{a_t,p_t}$ that starts with the pair $(a_t,p_t)$. 
If the symbol $a_j$ does not occur in  $R'$, then we search in the leftmost group $H_{a_j,0}$. 
Using $D_{a_t,p_t}$ (resp.\ $D_{a_t,0}$), we find 
the largest position $x_t\in H_{a_t,p_t}$ such that $x_t\le i_j$. Thus $x_t$ is the largest position in $C$ satisfying $x_t\le i_j$ 
and $C[x_t]=a_j$. We can then compute $\ra_{a_t}(x_t,C)$ as follows: Let  $n_1$ be the partial rank of $C[p_t]$, 
$n_1=\ra_{C[p_t]}(p_t,C)$. Recall that we explicitly store this information for every position in $R'$.  Let $n_2$ be the number of positions 
$i\in H_{a_t,p_t}$ satisfying $i\le x_t$. We can compute $n_2$ in $O(1)$ time using $D_{a_t,p_t}$. Then $\ra_{a_j}(x_t,C)=n_1+n_2$.
Since $C[x_t]$ is the rightmost occurrence of $a_j$ up to $C[i_j]$, $\ra_{a_j}(i_j,C)=\ra_{a_j}(x_t,C)$.  
The time needed to traverse the sequence $R'$ is $O(\sigma/\log^2 n)$ for all the queries. Other computations take $O(1)$ time per query. Hence the sequence of $v$ queries on a chunk is answered in $O(v+\sigma/\log^2 n)$ time. 

\paragraph{Global Sequence.}
Now we consider the global sequence of queries $\ra_{a_1}(i_1,B)$, $\ldots$, $\ra_{a_m}(i_m,B)$.  First we assign queries to chunks (e.g., by sorting all queries by $(\floor{i/\sigma}+1)$ using radix sort). We answer the batch of queries on the $j$-th chunk in $O(m_j+\sigma/\log^2 n)$ time where $m_j$ is the number of queries on the $j$-th chunk. Since $\sum m_j=m$, all $m$ queries are answered in $O(m +n/\log^2 n)=O(m)$ time. Now we know the  rank $n_{j,2}=\ra_{a_j}(i'_j,C)$, where $i'_j=i_j-\floor{i/\sigma}\sigma$ is the relative position of $B[i_j]$ in its chunk $C$. 

 The binary sequences $M_a$ allows us reduce rank queries on $B$ to rank queries on a chunk $C$.
All sequences $M_a$ contain  $n+\floor{n/\sigma}\sigma$ bits; hence they use $O(n)$ bits of space.  We can compute the number of occurrences of $a$ in the first $j$ chunks in $O(1)$ time  by answering one select query. Consider a  rank query $\ra_{a_j}(i_j,B)$ and suppose that $n_{j,2}$ is already known.  We compute  $n_{j,1}$, where $n_{j,1}=\sel_0(\floor{i_j/\sigma},M_{a_j})-\floor{i_j/\sigma}$ is the number of times $a_j$ occurs in the first $\floor{i_j/\sigma}$ chunks. Then we compute $\ra_{a_j}(i_j,B)=n_{j,1}+n_{j,2}$.
\begin{theorem}
  \label{theor:batchstat}
We can keep a sequence $B[0..n-1]$ over an alphabet of size $\sigma$ in $O(n\log\sigma)$ bits of space so that a batch of $m$ $\ra$ queries 
can be answered in $O(m)$ time, where $\frac{n}{\log^2 n} \le m\le n$.
\end{theorem}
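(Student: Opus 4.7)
The plan is to assemble the bounds claimed in the theorem by separately verifying the space and time accounting for the two-level data structure already sketched: a global level that uses the binary sequences $M_a$ to shift each query into its correct chunk, and a per-chunk level that uses the sorted pair sequence $R$, the sampled subsequence $R'$, and the per-group predecessor structures $D_{a,r}$. I would begin by dispatching the easy regime $\sigma<\log^4 n$ with the $O(1)$-rank structure of \cite{FMMN07}, so the rest of the argument can assume $\sigma\ge \log^4 n$ and in particular that $\log\sigma=\Theta(\log n)$.

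For the space bound I would tally the five ingredients explicitly. There are $n/\sigma$ chunks, each of size $\sigma$. Per chunk, $R$ stores $\sigma$ pairs $(a,i)$ of $O(\log\sigma)$ bits each, contributing $O(\sigma\log\sigma)$ bits; $F$ has $\sigma$ pointers into $R$, also $O(\sigma\log\sigma)$ bits; the sampled sequence $R'$ and the stored partial ranks contribute $O((\sigma/\log^2 n)\log\sigma)$ bits, which is negligible; and the groups $H_{a,r}$ together hold all $\sigma$ positions of the chunk packed in the $D_{a,r}$'s, at $O(\log\sigma)$ bits each. Summing over $n/\sigma$ chunks gives $O(n\log\sigma)$ bits. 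On top of this, the binary sequences $M_a$ contain $n+\lfloor n/\sigma\rfloor\sigma=O(n)$ bits in total, plus $o(n)$ bits for $O(1)$ rank/select support. Thus the whole structure fits in $O(n\log\sigma)$ bits.

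For the time bound I would batch-process in two stages. First, distribute the $m$ queries among the $n/\sigma$ chunks by a radix sort on $\lfloor i_j/\sigma\rfloor$, which takes $O(m+n/\sigma)$ time. Inside each chunk $C$ with $m_j$ queries, sort the queries lexicographically by $(a_j,i_j')$ (the cheap-regime/radix-regime case split gives $O(m_j+\sigma/\log^2 n)$ time), then merge-scan this sorted list against $R'$ to attach to every query a sampled pair $(a_t,p_t)\in R'$ and, hence, a group $H_{a_t,p_t}$; inside that group a single call to $D_{a_t,p_t}$ returns in $O(1)$ time both the predecessor $x_t$ and the count of elements of $H_{a_t,p_t}$ up to $x_t$, yielding the chunk-relative rank via $n_1+n_2$ as in the exposition. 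Summing over chunks gives $\sum_j (m_j+\sigma/\log^2 n)=O(m+n/\log^2 n)=O(m)$ since $m\ge n/\log^2 n$. Finally, for each query one $\mathrm{select}_0$ on $M_{a_j}$ converts the chunk rank to a global rank in $O(1)$ time.

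The technically delicate step, and the one I would treat as the main obstacle, is justifying the constant-time behaviour of the group structures $D_{a,r}$: each must support both predecessor search and rank-within-group on $\Theta(\log^2 n)$ integers in $O(1)$ time with only $O(\log^2 n\cdot\log\sigma)$ bits of storage. The predecessor half is handled by the Fredman–Willard fusion tree bound \cite{FW94}; the counting half is the one the paper defers to the full version, obtained by bit-parallel techniques in the spirit of \cite{NavarroN13} that pack the group into $O(\log n)$ words and use word-level masking and broadcast multiplication to extract the rank of a queried value. Once these $O(1)$ group primitives are in place, the chunk-level and global-level accounting above assemble into the claimed $O(m)$-time, $O(n\log\sigma)$-bit data structure.
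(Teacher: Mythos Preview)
Your proposal is correct and follows the paper's argument essentially verbatim: the same chunking of size $\sigma$, the same per-chunk structures ($R$, $R'$, $F$, and the group structures $D_{a,r}$), the same global bit vectors $M_a$, and the same time and space accounting, including the case split on $v$ for sorting the queries inside a chunk. The only slip is the parenthetical claim that $\sigma\ge\log^4 n$ implies $\log\sigma=\Theta(\log n)$, which is false (take $\sigma=\log^4 n$, giving $\log\sigma=\Theta(\log\log n)$), but you never actually use that claim in the bounds, so the argument is unaffected.
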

The static data structure of Theorem~\ref{theor:batchstat} can be dynamized so that batched 
queries and batched insertions are supported. Our dynamic data structures supports a batch of $m$ queries in time $O(m)$ and a batch of $m$ insertions in amortized time $O(m)$ for any $m$ that satisfies  $\frac{n}{\log_{\sigma}n}\le m\le n$. We describe the dynamic data structure in Sections~\ref{sec:listlabel} and~\ref{sec:batchdynseq}. 
\section{Building the Suffix Tree} 

Belazzougui proved the following result \cite{Belaz14}: if we are given the BWT $B$ of a text $T$ and if we can report all the distinct symbols in a range of $B$ in optimal time, then in $O(n)$ time we can: (i) enumerate all the suffix array intervals corresponding to internal nodes of the suffix tree and (ii) for every internal node list the labels of its children and their intervals. Further he showed that, if we can enumerate all the suffix tree intervals in $O(n)$ time, then we can build the suffix tree topology~\cite{Sadakane07} in $O(n)$ time. The algorithms need only $O(n)$ additional bits of space.  We refer to Lemmas 4 and 1 and their proofs in~\cite{Belaz14} for details. 

In Section~\ref{sec:partrank} we show that a partial rank data structure can
be built in $O(n)$ deterministic time. This can be used to build the desired
structure that reports the distinct symbols in a range, in $O(n)$ time 
and using $O(n\log\log\sigma)$ bits. The details are given in Section~\ref{sec:colrep}. Therefore, we obtain the following result.

\begin{lemma}
  \label{theor:suftreetopol}
If we already constructed the BWT of a text $T$, then we can build the suffix tree topology in $O(n)$ time using $O(n\log\log \sigma)$ additional bits. 
\end{lemma}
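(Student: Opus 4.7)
The plan is to assemble the lemma from two black-box reductions plus one new construction, deferring the heavy lifting to later sections. First, I would invoke Belazzougui's framework (Lemmas~1 and~4 in \cite{Belaz14}): given the BWT $B$ and a data structure that reports the distinct symbols occurring in any range $B[i..j]$ in time proportional to the output size, one can (a) enumerate in $O(n)$ total time every suffix-array interval corresponding to an internal suffix-tree node together with the edge labels and intervals of its children, and (b) convert that enumeration into the balanced-parentheses suffix-tree topology of \cite{Sadakane07} in $O(n)$ additional time and $O(n)$ additional bits. Thus the only non-trivial thing to establish is the existence of a colored-range-reporting structure over $B$ that can itself be \emph{built} deterministically in linear time within the claimed space budget.

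Next, I would build that structure in two layers. The inner layer is the partial rank data structure of Section~\ref{sec:partrank}, which answers $\ra_{a}(i,B)$ in $O(1)$ time under the promise $B[i]=a$, occupies $O(n\log\log\sigma)$ additional bits, and (this is the new contribution) admits an $O(n)$-time deterministic construction. The outer layer is the colored-range-reporting structure of Section~\ref{sec:colrep}, which is known to reduce to partial rank on the underlying sequence; I would verify that every step of its construction either is a linear scan of $B$ or a constant-time partial-rank call, so that plugging in the structure of Section~\ref{sec:partrank} turns the whole build into an $O(n)$ deterministic procedure. The space overhead stays $O(n\log\log\sigma)$ bits because the partial rank component dominates and the reporting overlay adds only $O(n)$ bits.

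Finally, composing the three ingredients gives the lemma: starting from the BWT, in $O(n)$ time and $O(n\log\log\sigma)$ bits I construct partial rank on $B$, then the colored-range-reporting index on top of it, then run Belazzougui's enumeration to list all internal-node intervals with children, and then his topology builder to output the balanced-parentheses tree. Summing the time bounds gives $O(n)$, and the largest additive space term is $O(n\log\log\sigma)$ bits, as required.

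The main obstacle is clearly the deterministic $O(n)$-time construction of the partial rank structure used in the inner layer; previously known constructions either pay a $\log\log\sigma$ factor or rely on randomization, which is exactly why the result was not deterministic-linear before. Once Section~\ref{sec:partrank} delivers that subroutine, Belazzougui's two reductions proceed essentially verbatim, and the remaining work in Section~\ref{sec:colrep} is to check that no step there hides a super-linear deterministic cost beyond partial rank.
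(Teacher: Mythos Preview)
Your proposal is correct and matches the paper's own argument essentially step for step: invoke Belazzougui's Lemmas~1 and~4 to reduce the problem to optimal-time distinct-symbol reporting on $B$, supply that primitive via the colored-range structure of Section~\ref{sec:colrep}, and observe that its only nontrivial ingredient is the deterministic $O(n)$-time partial-rank construction of Section~\ref{sec:partrank}. The paper makes exactly this decomposition, with the same attribution of the space bound to the $O(n\log\log\sigma)$ bits of the partial-rank layer.
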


In Section~\ref{sec:permlcp} we show that the permuted LCP array of $T$ can be constructed in $O(n)$ time using $O(n\log\sigma)$ bits of space. Thus we obtain our main result on building compressed suffix trees. 

\begin{theorem}
  \label{theor:cst}
Given a string $T[0..n-1]$ over an alphabet of size $\sigma$, we can construct the compressed suffix tree of $T$ in $O(n)$ deterministic time using $O(n\log\sigma)$ additional bits. 
\end{theorem}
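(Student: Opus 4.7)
The plan is to compose the three ingredients the paper has already assembled. Sadakane's compressed suffix tree decomposes into (i) a compressed suffix array (CSA) for $T$, (ii) a succinct encoding of the suffix tree topology (balanced parentheses augmented with the usual navigation operators), and (iii) the permuted LCP array. If each of these can be built in $O(n)$ deterministic time within $O(n\log\sigma)$ bits, gluing them together yields the theorem, since the three encodings coexist in $O(n\log\sigma)$ bits and their navigation operators interact only through constant-time primitives.

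First I invoke Theorem~\ref{theor:bwt} to obtain the BWT $B$ of $T$ in $O(n)$ deterministic time and $O(n\log\sigma)$ bits. Given $B$, the CSA is standard: the accumulator table $\Acc$ (already maintained by the BWT construction) together with $\ra$ on $B$ realizes the LF-mapping, and sampling the suffix array at one out of every $\Theta(\log_\sigma n)$ positions supports $\tsa$ within the claimed space. The sampled positions can be identified during the BWT construction itself (the array $W$ of Section~\ref{sec:lintimebwt} gives, for each $j$-suffix, its rank in $B$), so no extra construction time is incurred. The required rank/select machinery on $B$ is provided by the partial-rank structure of Section~\ref{sec:partrank}, which fits in $O(n\log\log\sigma)$ extra bits and is itself built in $O(n)$ deterministic time.

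Next I apply Lemma~\ref{theor:suftreetopol} directly to $B$ to produce the balanced-parentheses encoding of the suffix tree topology in $O(n)$ time and $O(n\log\log\sigma)$ additional bits. Finally, with both $B$ and the topology in hand, the construction of Section~\ref{sec:permlcp} yields the permuted LCP array in $O(n)$ time and $O(n\log\sigma)$ bits. Summing the three budgets gives deterministic running time $O(n)$ and working space $O(n\log\sigma)$ bits, which is the statement of the theorem. The only delicate ingredient is the PLCP construction, which is treated separately in Section~\ref{sec:permlcp}; like the BWT construction, it hinges on discharging large batches of rank queries on $B$ in amortized constant time per query, a primitive already supplied by Section~\ref{sec:batchrank} and its dynamization in Sections~\ref{sec:listlabel} and~\ref{sec:batchdynseq}, so the composition is clean.
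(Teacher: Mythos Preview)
Your proposal is correct and follows essentially the same route as the paper: build the BWT via Theorem~\ref{theor:bwt}, derive the suffix tree topology via Lemma~\ref{theor:suftreetopol}, then obtain the PLCP via Section~\ref{sec:permlcp} (Lemma~\ref{lemma:permlcp}), and assemble these three components into the CST. The only very minor slip is in your closing remark---the PLCP construction in Section~\ref{sec:permlcp} uses only the \emph{static} batched-rank structure of Theorem~\ref{theor:batchstat}, not the dynamization of Sections~\ref{sec:listlabel}--\ref{sec:batchdynseq}; the latter is needed solely for the BWT construction.
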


\section{Constructing the Permuted LCP Array}
\label{sec:permlcp}
The permuted LCP array is defined as $PLCP[i]=j$ if and only if $SA[r]=i$ and the longest common prefix of $T[SA[r]..]$ and $T[SA[r-1]..]$ is of length $j$. In other words $PLCP[i]$ is the length of the longest common prefix of $T[i..]$ and the suffix that precedes it in the lexicographic ordering. 
In this section we show how the permuted LCP array $PLCP[0..n-1]$ can be built in linear time.

\paragraph{Preliminaries.}
For $i=0,1,\ldots,n$ let $\ell_i=PLCP[i]$. It is easy to observe that $\ell_i\le \ell_{i+1}+1$: if the longest common prefix of $T[i..]$ and $T[j..]$ is $q$,
then the longest common prefix of $T[i+1..]$ and $T[j+1..]$ is at least $q-1$. Let $\Delta'=\Delta\log\log\sigma$ for $\Delta=\log_{\sigma}n$. By the same argument $\ell_{i}\le \ell_{i+\Delta'}+\Delta'$.  
To simplify the description we will further assume that $\ell_{-1}=0$. It can also be shown that $\sum_{i=0}^{n-1}(\ell_i-\ell_{i-1})=O(n)$.

We will denote by $B$ the BWT sequence of $T$; $\oB$ denotes the BWT of the reversed text $\oT=T[n-1]T[n-2]\ldots T[1]T[0]$.  Let $p$ be a factor (substring) of $T$ and let $c$ be a character. The operation $\extendright(p,c)$ computes the suffix interval of $pc$ in $B$ and the suffix interval of $\overline{pc}$ in $\oB$ provided that the intervals of $p$ and $\op$ are known. The operation $\contractleft(cp)$ computes the suffix intervals of $p$ and $\op$ provided that the suffix intervals of factors $cp$ and $\overline{cp}$ are known\footnote{Throughout this paper reverse strings are overscored. Thus $\overline{p}$ and $\overline{pc}$ are reverse strings of $p$ and $pc$ respectively.}.
It was demonstrated~\cite{SchnattingerOG12,BelazzouguiCKM13}
that both operations can be supported by answering $O(1)$ rank queries on $B$ and $\oB$. 

Belazzougui~\cite{Belaz14} proposed the following algorithm 
for consecutive computing of  $\ell_0$, $\ell_1$, $\ldots$, $\ell_n$.  Suppose that $\ell_{i-1}$ is already known. We already know the rank $r_{i-1}$ of $T[i-1..]$, the interval 
of $T[i-1..i+\ell_{i-1}-1]$ in $B$, and the interval of $\overline{T[i-1..i+\ell_{i-1}-1]}$ in $\oB$. We compute the rank 
$r_i$ of $T[i.. ]$. If $r_{i-1}$ is known, we can compute $r_i$ in $O(1)$ time by answering one select query on $B$; see Section~\ref{sec:prelim}. Then  we find the interval $[r_s,r_e]$ of $T[i..i+\ell_{i-1}-1]$  in $B$ and the interval $[r'_s,r'_e]$ of  $\overline{T[i..i+\ell_{i-1}-1]}$ in $\oB$.  These two intervals can be computed by $\contractleft$. In the special case when $i=0$ or $\ell_{i-1}=0$, we set $[r_s,r_e]=[r'_s,r'_e]=[0,n-1]$. Then for $j=1,2,\ldots $ we find the intervals for $T[i..i+(\ell_{i-1}-1)+j]$ and $\overline{T[i..i+(\ell_{i-1}-1)+j]}$. Every following pair of intervals is found by operation $\extendright$. We stop when 
the interval  of $T[i..i+\ell_{i-1}-1+j]$ is $[r_{s,j},r_{e,j}]$ such that $r_{s,j}=r_i$. For all $j'$,  such that $0\le j'<j$, we have   $r_{s,j'}<r_i$. It can be shown that $\ell_i=\ell_{i-1}+j-1$; see the proof of \cite[Lemma 2]{Belaz14}.  Once $\ell_i$ is computed, we increment $i$ and find the next $\ell_i$ in the same way. All $\ell_i$ are computed by $O(n)$ $\contractleft$ and $\extendright$ operations.

\paragraph{Implementing $\contractleft$ and $\extendright$.}
We create the succinct representation of the suffix tree topology both for $T$ and $\oT$; they  will be denoted by $\cT$ and $\ocT$ respectively. We keep both $B$ and $\oB$ in the data structure that supports access in $O(1)$ time. We also store $B$ in the data structure that answers $\sel$ queries in $O(1)$ time. The array $\Acc$ keeps information about accumulated frequencies of symbols: $\Acc[i]$ is the number of occurrences  of all symbols $a\leq i-1$ in $B$. Operation $\contractleft$ is implemented as follows.   Suppose that we know the interval $[i,j]$ for a factor $cp$ and the interval $[i',j']$ for the factor $\overline{cp}$.  We can compute the interval $[i_1,j_1]$ of $p$ by finding $l=\sel_c(i-\Acc[c],B)$ and $r=\sel_c(j-\Acc[c],B)$. Then we find the lowest common ancestor $x$ of leaves $l$ and $r$ in the suffix tree $\cT$. We set $i_1=\mathtt{leftmost\_leaf}(x)$ and $j_1=\mathtt{rightmost\_leaf}(x)$.  Then we consider the number of distinct symbols in $B[i_1..j_1]$. If $c$ is the only  symbol that occurs in $B[i_1..j_1]$, then all factors $p$ in $T$ are preceded by $c$. Hence all factors $\overline{p}$ in $\oT$ are followed by $c$ and $[i'_1,j'_1]=[i',j']$.  Otherwise we find the lowest common ancestor $y$ of leaves $i'$ and $j'$ in $\ocT$. Then we identify $y'=\mathtt{parent}(y)$ in $\ocT$ and let $i'_1=\mathtt{leftmost\_leaf}(y')$ and $j'_1=\mathtt{rightmost\_leaf}(y')$.   Thus $\contractleft$ can be supported in $O(1)$ time.

Now we consider the operation $\extendright$. Suppose that $[i,j]$ and $[i',j']$ are intervals of $p$ and $\overline{p}$ in $B$ and $\oB$ respectively. We compute the interval of $\overline{pc}$ by using the standard BWT machinery.  Let $i'_1=\ra_c(i'-1,\oB)+\Acc[c]$ and $j'_1=\ra_c(j',\oB)+\Acc[c]-1$. 
We check whether $c$ is the only symbol in $\oB[i'..j']$. If this is the case, then 
all occurrences of $\op$ in $\oT$ are preceded by $c$ and all occurrences of $p$ in $T$ are followed by $c$. Hence the interval of $pc$ in $B$ is 
$[i_1,j_1]=[i,j]$. Otherwise there is at least one other symbol besides $c$ that can follow $p$.  Let $x$ denote the lowest common ancestor of leaves $i$ and $j$. 
If $y$ is the child of $x$ that is labeled with $c$, then the interval of $pc$ is $[i_1,j_1]$ where  $i_1=\mathtt{leftmost\_leaf}(y)$ and  $j_1=\mathtt{rightmost\_leaf}(y)$.

We can find the child $y$ of $x$ that is labeled with $c$ by answering rank and select queries on two additional sequences, $L$ and $D$.
The sequence $L$ contains labels of children for all nodes of $\cT$; labels are ordered by nodes and labels of the same node are ordered lexicographically. We  encode the degrees of all nodes in a sequence $D=1^{d_1}01^{d_2}0\ldots 1^{d_n}$, where $d_i$ is the degree of the $i$-th node.  
We compute $v=\sel_0(x,D)-x$, $p_1=\ra_{c}(v,L)$, $p_2=\sel_c(p_1+1,L)$, and $j=p_2-v$. Then $y$ is  the $j$-th child of $x$. 
The bottleneck of $\extendright$ are the computations of $p_1$, $i_1'$, and $j_1'$ because we need $\Omega(\log\frac{\log \sigma}{\log\log n})$ time to answer a rank query on $L$ (resp.\ on $\oB$); all other calculations can be executed in $O(1)$ time.

\paragraph{Our Approach.}
Our algorithm follows the technique of~\cite{Belaz14}  that relies on operations $\extendright$ and $\contractleft$ for building the PLCP.  We implement these two operations as described above;  hence we will have  to perform $\Theta(n)$  $\ra$ queries on sequences $L$ and $\oB$. Our method creates large batches of queries; each query in a batch is answered in $O(1)$ time using  Theorem~\ref{theor:batchstat}.   

During the pre-processing stage we create the machinery for supporting 
operations $\extendright$ and $\contractleft$. We compute the BWT  $B$ of $T$ and the BWT $\oB$ for the reverse text $\oT$. We also construct the suffix tree topologies $\cT$ and $\ocT$. 
When $B$ is constructed, we record the positions in $B$ that correspond to suffixes $T[i\cdot\Delta'..]$ for $i=0,\ldots, \floor{n/\Delta'}$. PLCP construction is divided into three stages: first we compute the values of $\ell_i$ for selected evenly spaced indices $i$,  $i=j\cdot \Delta'$ and $j=0,1$,$\ldots$,$\floor{n/\Delta'}$. We use a slow algorithm for computing lengths that takes $O(\Delta')$ extra time for every $\ell_i$.  During the second stage we compute all remaining values of $\ell_i$. We use the method from~\cite{Belaz14} during Stage 2. The key to a fast implementation  is  ``parallel''  computation.  We divide all lengths into groups and assign each group of lengths to a \emph{job}. At any time we process a list containing at least $2n/\log^2 n$ jobs.  We answer $\ra$ queries in batches: when a job $J_i$ must answer a slow $\ra$ query on $L$ or $\oB$, we pause $J_i$ and add the rank query to the corresponding pool of queries. When a pool of queries on $L$ or the pool of queries on $\oB$ contains $n/\log^2 n$ items, we answer the batch of queries in $O(n/\log^2 n)$ time. The third stage starts when the number of jobs becomes smaller than  $2n/\log^2n$.   All lengths that were not computed earlier are computed during Stage 3 using the slow algorithm.  Stage 2 can be executed in $O(n)$ time because $\ra$ queries are answered in $O(1)$ time per query. Since the number of lengths that we compute during the first and the third stages is small, Stage 1 and Stage 3 also take time $O(n)$.  A more detailed description follows. 

\paragraph{Stage 1.}
Our algorithm starts by computing $\ell_i$ for $i=j\cdot\Delta'$ and $j=0,1,\ldots, \floor{n/\Delta'}$. Let $j=0$ and $f=j\Delta'$.  We already know the rank $r_f$ of $S_f=T[j\Delta'..]$ in $B$ ($r_f$ was computed and recorded when $B$ was constructed). We can also find the starting position $f'$ of the suffix $S'$  of rank $r_f-1$, $S'=T[f'..]$. Since $f'$ can be found by  employing the function LF at most $\Delta'$ times, we can compute $f'$ in $O(\Delta')$ time; see Section~\ref{sec:prelim}\footnote{A faster computation is possible, but we do not need it here.}.  When $f$ and $f'$ are known, we scan $T[f..]$
and $T[f'..]$ until the first symbol $T[f+p_f]\not=T[f'+p_f]$ is found. By definition of $\ell_j$,  $\ell_0=p_f-1$.  Suppose that 
$\ell_{s\Delta'}$ for $s=0$, $\ldots$, $j-1$ are already computed and we have to compute $\ell_f$ for $f=j\Delta'$ and some $j\ge 1$.  We already know the rank $r_f$ of suffix $T[f..]$. We find $f'$ such that the suffix $T[f'..]$ is of rank $r_f-1$  in time $O(\Delta')$. 
We showed above  that $\ell_f\ge \ell_{(j-1)\Delta'}-\Delta'$. Hence the first $o_f$ symbols in $T[f..]$ and $T[f'..]$ are equal, where $o_f=\max(0,\ell_{(j-1)\Delta'}-\Delta')$.  We scan $T[f+o_f..]$ and $T[f'+o_f..]$ until the first symbol $T[f+o_f+p_f]\not=T[f'+o_f+p_f]$ is found. By definition, $\ell_f=o_f+p_f$. Hence we compute $\ell_f$ in $O(\Delta'+ p_f)$ time for $f=j\Delta'$ and $j=1$, $\ldots$, $\floor{n/\Delta'}$. It can be shown that $\sum_f p_f=O(n)$. Hence the total time needed to compute all selected $\ell_f$ is $O((n/\Delta')\Delta'+ \sum_f p_f)=O(n)$.    For every $f=j\Delta'$ we also compute the interval of $T[j\Delta'..j\Delta'+\ell_f]$ in $B$ and the interval of $\overline{T[j\Delta'..j\Delta'+\ell_{f}]}$ in $\oB$. We show in  Section~\ref{sec:intervals} that all needed intervals can be computed in $O(n)$ time.

\paragraph{Stage 2.}
We divide $\ell_i$ into groups of size $\Delta'-1$ and compute the values of $\ell_k$ in every group using a \emph{job}. The $i$-th group contains lengths $\ell_{k+1}$, $\ell_{k+2}$, $\ldots$, $\ell_{k+\Delta'-1}$ for $k=i\Delta'$ and $i=0,1,\ldots$. All $\ell_k$ in the $i$-th group will be computed by the $i$-th \emph{job} $J_i$. Every $J_i$ is either active or paused.  Thus originally we start with a list of $n/\Delta'$ jobs and all of them are active.  All active jobs are executed at the same time. That is, we scan the list of active jobs,  spend $O(1)$ time on every active job, and then move on to the next job. When a job must answer a rank query, we pause it and insert the query into a \emph{query list}.
There are two query lists: $Q_l$ contain rank queries on sequence $L$ and $Q_b$ contains rank queries on $\oB$.  When $Q_l$ or $Q_b$ contains $n/\log^2 n$ queries, we answer all queries in $Q_l$ (resp.\ in $Q_b$). The batch of queries is answered using Theorem~\ref{theor:batchstat}, so that every query is answered in $O(1)$ time. Answers to queries are returned to jobs, corresponding jobs are re-activated,  and we continue scanning the list of active jobs.  When all $\ell_k$ for  $i\Delta' \le k <(i+1)\Delta'$ are computed, the $i$-th job is finished; we remove this job from the pool of jobs and decrement by $1$ the number of jobs. See Fig.~\ref{fig:jobs-ex}.
\begin{figure}[tb]
  \centering
  \includegraphics[height=.25\textheight]{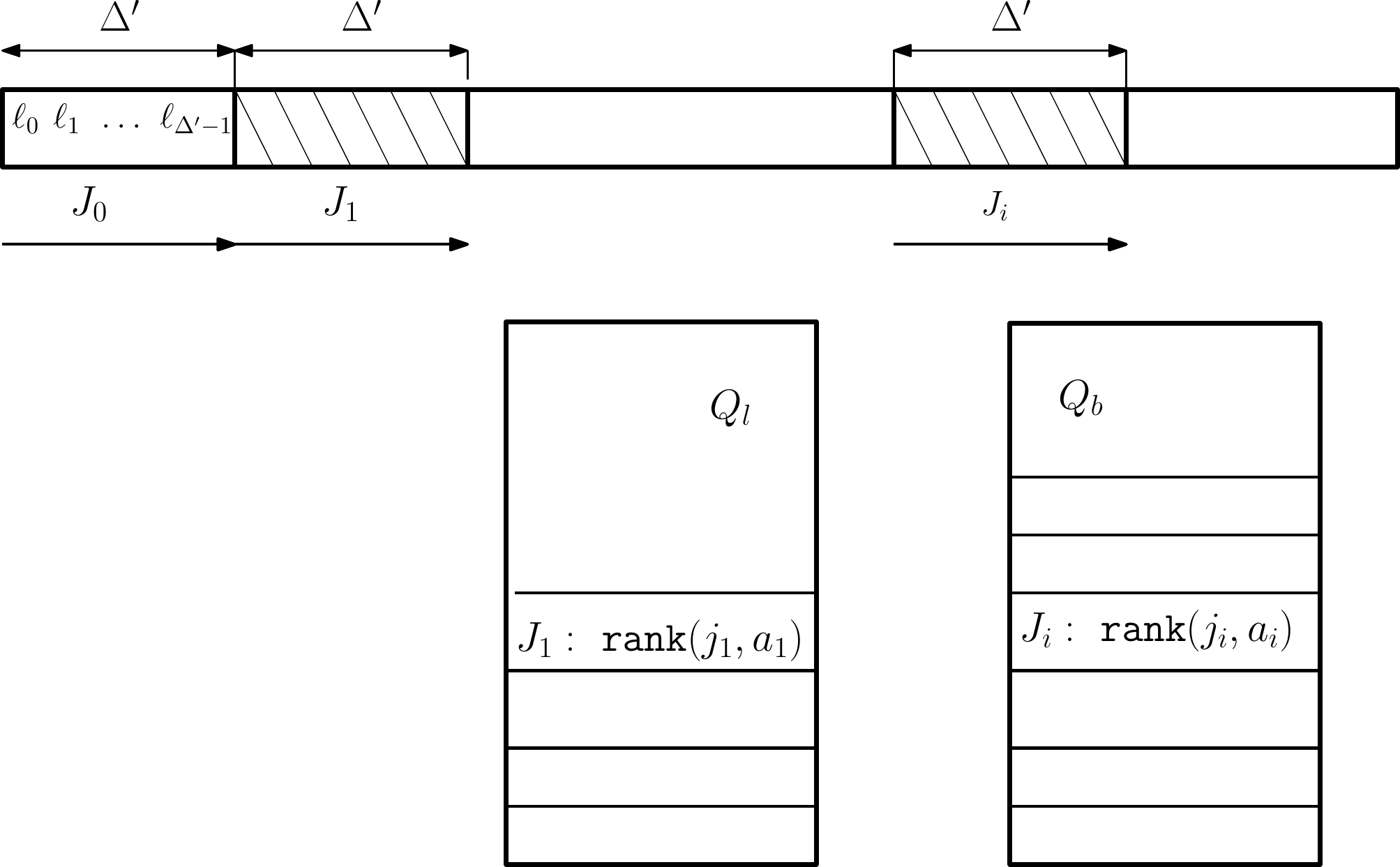}
  \caption{Computing lengths during Stage 2. Groups corresponding to paused jobs are shown shaded by slanted lines. Only selected groups are shown. The $i$-th job $J_i$ is paused because we have to answer a $\ra$ query on $\oB$; the job $J_1$ is paused because we have to answer a $\ra$ query on $L$. When $Q_l$ or $Q_b$ contains $n/\log^2 n$ queries, we answer a batch of $\ra$ queries contained in $Q_l$ or $Q_b$.}
  \label{fig:jobs-ex}
\end{figure}

Every job  $J_i$ computes $\ell_{k+1}$, $\ell_{k+2}$, $\ldots$, $\ell_{k+\Delta'-1}$ for $k=i\Delta'$ using the algorithm of Belazzougui~\cite{Belaz14}. When the interval of $T[i+\ell_k..]$ in $B$ and the interval of $\overline{T[i+\ell_k..]}$  in $\oB$ are known, we compute $\ell_{k+1}$.  The procedure for computing $\ell_{k+1}$ must execute   one operation $\contractleft$  and $\ell_{k+1}-\ell_{k}+1$ operations $\extendright$.  Operations $\contractleft$  and $\extendright$  are implemented as described above. We must answer two rank queries on $\oB$ and one rank query on $L$ for every $\extendright$.  Ignoring the time for these three rank queries, $\extendright$ takes constant time. Rank queries on $\oB$ and $L$ are answered in batches, so that each $\ra$ query takes $O(1)$ time.  Hence every operation $\extendright$ needs $O(1)$ time. The job $J_i$ needs $O(\ell_{i\Delta'+j}-\ell_{i\Delta'}+j)$ time to compute $\ell_{i\Delta'+1}$, $\ell_{i\Delta'+1}$, $\ldots$, $\ell_{i\Delta'+j}$. All $J_i$ are executed in $O(n)$ time.
   
\paragraph{Stage 3.}
``Parallel processing'' of jobs  terminates when the number of jobs in the pool becomes smaller than $2n/\log^2n$. Since every job computes $\Delta'$ values of $\ell_i$, there are at most $2n(\log\log\sigma/(\log n\log\sigma))< 2n/\log n$ unknown values of $\ell_i$ at this point. We then switch to the method of Stage 1 to compute the values of unknown $\ell_i$. All remaining $\ell_i$ are sorted by $i$ and processed in order of increasing $i$. For every unknown $\ell_i$ we compute the rank $r$ of $T[i .. ]$ in $B$. For the suffix $S'$ of rank $r-1$ we find its starting position $f'$ in $T$, $S'=T[f'..]$. Then we scan $T[f'+\ell_{i-1}-1..]$ and $T[i+\ell_{i-1}-1..]$ until the first symbol $T[f'+\ell_{i-1}+j-1]\not=T[f+\ell_{i-1}+j-1]$ is found. We set $\ell_i=\ell_{i-1}+j-2$ and continue with the next unknown $\ell_i$.  We spend $O(\Delta'+\ell_i)$ additional time for every remaining $\ell_i$; hence the total time needed to compute all $\ell_i$ is $O(n+(n/\log n)\Delta')=O(n)$.

Every job during Stage 2 uses $O(\log n)$ bits of workspace. The total number of jobs in the job list does not exceed $n/\Delta'$. The total number of queries stored at any time in lists $Q_l$ and $Q_b$ does not exceed $n/\log^2n$. Hence our algorithm uses $O(n\log \sigma)$ bits of workspace.  
\begin{lemma}
\label{lemma:permlcp}
  If the BWT of a string $T$ and the suffix tree topology for $T$ are already known, then we can compute the permuted LCP array in $O(n)$ time and $O(n\log\sigma)$ bits.  
\end{lemma}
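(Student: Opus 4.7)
The plan is to prove Lemma~\ref{lemma:permlcp} by combining a correctness argument for Belazzougui's iterative scheme with a time/space analysis of the three-stage procedure described above. Correctness of each step $\ell_{i-1}\mapsto\ell_i$ rests on the invariant from~\cite{Belaz14}: after one \contractleft\ we hold the intervals of $T[i..i+\ell_{i-1}-1]$ in $B$ and of the corresponding reverse factor in $\oB$; each subsequent \extendright\ yields the interval of $T[i..i+\ell_{i-1}-1+j]$, and the loop halts the first time this interval narrows to $\{r_i\}$, which happens exactly for $j=\ell_i-\ell_{i-1}+1$. The implementations of \contractleft\ and \extendright\ given above require only $O(1)$ select queries on $B$, $O(1)$ constant-time topology operations on $\cT$ and $\ocT$ (lowest common ancestor, leftmost/rightmost leaf, parent), and $O(1)$ rank queries on $L$ and $\oB$, so reducing the entire computation to rank queries is the only real bottleneck.

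For the time analysis I would first bound the total number of \extendright\ calls summed over all $i$ by $\sum_{i=0}^{n-1}(\ell_i-\ell_{i-1}+1)=\ell_{n-1}+n=O(n)$, using telescoping and $\ell_{-1}=0$; the PLCP descent property $\ell_i\ge\ell_{i-1}-1$ ensures each term is nonnegative, so this sum really dominates the workload. The same bound controls the character scans of Stage~1: starting from $o_f=\max(0,\ell_{(j-1)\Delta'}-\Delta')$ is justified by $\ell_{j\Delta'}\ge\ell_{(j-1)\Delta'}-\Delta'$, and the advances $p_f$ beyond $o_f$ telescope with the sampled values, giving $\sum_f p_f=O(n)$ and hence Stage~1 time $O((n/\Delta')\Delta'+\sum_f p_f)=O(n)$. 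Stage~3 handles at most $2n/\log n$ positions left unfinished, each in $O(\Delta'+\ell_i)$ by the same comparison argument, for a total of $O((n/\log n)\Delta'+n)=O(n)$.

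The main obstacle is Stage~2, where $\Theta(n)$ rank queries on $L$ and $\oB$ must be amortized to $O(1)$ time each via Theorem~\ref{theor:batchstat}. The plan is to show that whenever at least $2n/\log^2 n$ jobs are active and the scheduler cycles through them in round-robin fashion, one of $Q_l,Q_b$ fills to size $n/\log^2 n$ before the scheduler runs out of work, so each flush costs $O(n/\log^2 n)$ and answers $n/\log^2 n$ queries. Charging each query to its flush and summing across all flushes yields total flushing cost proportional to the total number of rank queries, which is $O(n)$ by the preceding count. The non-query work performed by job $J_i$ is $O(\Delta'+\sum_{k\in J_i}(\ell_k-\ell_{k-1}+1))$, accumulating to $O(n)$ across all $n/\Delta'$ jobs, so Stage~2 runs in $O(n)$ time overall.

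For space, I would sum contributions: $B$, $\oB$, $\cT$, $\ocT$, the sequences $L$ and $D$, the array $\Acc$, and the batched rank structures of Theorem~\ref{theor:batchstat} on $L$ and $\oB$ all fit in $O(n\log\sigma)$ bits; each of the at most $n/\Delta'$ jobs carries $O(\log n)$ bits of state (two interval pairs, a depth counter, a pointer into its pending batch), totalling $O(n\log\sigma/\log\log\sigma)$ bits; the pools $Q_l,Q_b$ together hold at most $O(n/\log^2 n)$ queries of $O(\log n)$ bits each. Summing yields $O(n\log\sigma)$ bits, completing the proof.
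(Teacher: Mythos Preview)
Your plan tracks the paper's three-stage argument closely and the correctness, telescoping, and space accounting are all fine. There is, however, one genuine gap.

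Stage~2 cannot start unless each job $J_i$ is handed, at its birth, the interval of $T[i\Delta'..i\Delta'+\ell_{i\Delta'}]$ in $B$ \emph{and} the interval of $\overline{T[i\Delta'..i\Delta'+\ell_{i\Delta'}]}$ in $\oB$. You list ``two interval pairs'' as part of a job's state but never explain how they are produced. Stage~1 as you describe it computes only the \emph{lengths} $\ell_{j\Delta'}$ by direct character comparison in $T$; that says nothing about the intervals. Computing those intervals from scratch by $\ell_{j\Delta'}$ many \extendright\ calls per sample is too expensive (the sum $\sum_j \ell_{j\Delta'}$ is not $O(n)$ in general), and computing them incrementally across $j$ via $\Delta'$ \contractleft\ followed by $(\ell_{(j+1)\Delta'}-\ell_{j\Delta'}+\Delta')$ \extendright\ is sequential in $j$, so your Stage~2 batching trick does not apply: the \extendright\ calls within one step form a chain, and only one sample is being processed at a time, so you cannot accumulate $n/\log^2 n$ independent rank queries.

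The paper closes this gap with an independent piece of machinery (its Section on ``Computing the Intervals''): a heavy/light marking of suffix-tree nodes and Weiner links, together with deterministic dictionaries at the $O(n/d)$ special nodes, that lets a \emph{sequence} of $t$ consecutive \extendright\ operations be executed in $O(t+\log\log\sigma)$ time rather than $O(t\log\log\sigma)$, because any root-to-node path contains at most one ``difficult'' step. With that, the incremental interval computation across all samples costs $O(n + (n/\Delta')\log\log\sigma)=O(n)$. Your plan needs either this ingredient or a substitute for it; without it the seeding of Stage~2 is unaccounted for and the proof does not go through.
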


\section{Conclusions}
\label{sec:concl}
We have shown that the Burrows-Wheeler Transform (BWT), the Compressed Suffix 
Array (CSA), and the Compressed Suffix Tree (CST) can be built in deterministic
$O(n)$ time by an algorithm that requires $O(n\log\sigma)$ bits of working 
space. Belazzougui independently developed an alternative 
solution, which also builds within the same time and space the 
simpler part of our structures, that is, the BWT and the CSA, but not the CST. His solution, that  uses different techniques, is described in the updated version of his ArXiV report~\cite{Belaz14arx} that extends his conference  paper~\cite{Belaz14}. 

Our results have  many interesting applications. For example, we can now construct an FM-index \cite{FerraginaM05,FMMN07} in $O(n)$ deterministic time using $O(n\log\sigma)$ bits. Previous results need $O(n\log\log \sigma)$ time or rely on randomization~\cite{HonSS09,Belaz14}. Furthermore Theorem~\ref{theor:partrank} enables us to support the function LF in $O(1)$ time on an FM-index. In Section~\ref{sec:index} we describe a new index based on these ideas.

Another application is that we can now compute the Lempel-Ziv 77 and 78 parsings 
\cite{LZ76,ZL77,ZL78} of a string $T[0..n-1]$ in deterministic linear time  using
$O(n\log\sigma)$ bits: K{\"o}ppl and Sadakane \cite{KS16} recently showed that,
if one has a compressed suffix tree on $T$, then they need only $O(n)$ 
additional (deterministic) time and $O(z\log n)$ bits to produce the parsing,
where $z$ is the resulting number of phrases. Since $z \le n/\log_\sigma n$,
the space is $O(n\log\sigma)$ bits. With the suffix tree, they need to compute
in constant time any $\Psi(i)$ and to move in constant time from a suffix tree
node to its $i$-th child. The former is easily supported as the inverse of the
LF function using constant-time select queries on $B$ \cite{GolynskiMR06};
the latter is also easily obtained with current topology representations
using parentheses \cite{NS14}. 

Yet another immediate application of our algorithm are index data structures for dynamic document collections. If we use our compressed index, described in Section~\ref{sec:index}, and apply Transformation 2 from~\cite{MunroNV15}, then we obtain an index data structure for a dynamic collection of documents that uses $nH_k + o(n\log\sigma)+O(n\frac{\log n}{s})$ bits where $H_k$ is the $k$-th order entropy and $s$ is a parameter. This index can count how many times a query pattern $P$ occurs in a collection in $O(|P|\log\log n + \log\log \sigma\log\log n)$ time; every occurrence can be then reported in time $O(s)$. An insertion or a deletion of some document $T_u$  is supported in $O(|T_u|\log^{\eps}n)$ and $O(|T_u|(\log^{\eps}n+s))$ deterministic time respectively. 

We believe that our technique can also  improve upon some of the recently presented results on bidirectional FM-indices~\cite{SchnattingerOG12,BelazzouguiCKM13} and other scenarios where compressed suffix trees are used~\cite{BelazzouguiCKM16}.
\paragraph{Acknowledgment.} The authors wish to thank an anonymous reviewer of this paper for careful reading and helpful comments.

\newpage

\bibliographystyle{abbrv}
\bibliography{lintime-index}

\newpage
\appendix
\renewcommand\thesection{A.\arabic{section}}

\section{Preliminaries}
\label{sec:prelim}
\paragraph{Rank and Select Queries}
The following two kinds of queries play a crucial role in compressed indexes and other succinct data structures. Consider a sequence $B[0..n-1]$ of symbols over an alphabet of size $\sigma$. The rank query $\ra_a(i,B)$ counts how many times $a$ occurs among the first $i+1$ symbols 
in $B$, $\ra_a(i,B)=|\{\,j\,|\, B[j]=a \text{ and } 0\le j< i\,\}|$.  The select query $\sel_a(i,B)$ finds the position in $B$ where $a$ occurs for the $i$-th time, $\sel_a(i,B)=j$ where $j$ is such that  $B[j]=a$ and $\ra_a(j,B)=i$. The third kind of query is the access query, $\acc(i,B)$, which returns the $(i+1)$-th symbol in $B$, $B[i]$.
If insertions and deletions of symbols in $B$ must be supported, then both kinds of queries require $\Omega(\log n/\log\log n)$ time~\cite{FS89}.  If the sequence $B$ is static, then we can answer select queries in $O(1)$ time and the cost of $\ra$ queries is reduced to $\Theta(\log\frac{\log \sigma}{\log\log n})$ \cite{BelazzouguiN15}.\footnote{If we aim to use $n\log\sigma +o(n\log\sigma)$ bits, then either $\sel$ or $\acc$ must cost $\omega(1)$. If, however, $(1+\epsilon)n\log\sigma$ bits are available, for any constant $\epsilon>0$, then we can support both queries in $O(1)$ time.} One important special case of $\ra$ queries is the partial rank query, $\ra_{B[i]}(i,B)$. Thus a partial rank query asks how many times $B[i]$ occurred in $B[0..i]$. Unlike general rank queries, partial rank queries can be answered in $O(1)$ time~\cite{BelazzouguiN15}.  In Section~\ref{sec:partrank} we describe a data structure for partial rank queries that can be constructed in $O(n)$ deterministic time. 
Better results can be achieved in the special case when the alphabet size is $\sigma=\log^{O(1)}n$; in this case  we can represent $B$ so that $\ra$, $\sel$, and $\acc$ queries are answered in $O(1)$ time~\cite{FMMN07}.

\paragraph{Suffix Tree and Suffix Array.}
A suffix tree for a string $T[0..n-1]$ is a compacted tree on the suffixes of $T$. 
The suffix array is an array $SA[0..n-1]$ such that $SA[i]=j$ if and only if $T[j..]$ is the $(i+1)$-th lexicographically smallest suffix of $T$. All occurrences of a substring $p$ in $T$ correspond to suffixes of $T$ that start with $p$; these suffixes occupy a contiguous interval in the suffix array $SA$.

\paragraph{Compressed Suffix Array.}
A compressed suffix array (CSA) is a compact data structure that provides the same functionality as the suffix array. The main component of CSA is the function $\Psi$, defined by the equality $SA[\Psi(i+1)]=(SA[i]+1)\!\!\mod n$. It is possible to regenerate the suffix array  from $\Psi$. We refer to~\cite{NM06}  and references therein for a detailed description of CSA and for  trade-offs between space usage and access time.
\paragraph{Burrows-Wheeler Transform and FM-index.}
The Burrows-Wheeler Transform (BWT) of a string $T$ is obtained by sorting all possible rotations of $T$ and writing the last symbol of every rotation (in sorted order). The BWT is related to the suffix array as follows: $BWT[i]=T[(SA[i]-1)\!\!\mod n]$. Hence, we can build the BWT by sorting the suffixes and writing the symbols that precede the suffixes in lexicographical order. This method is used in Section~\ref{sec:lintimebwt}.

The FM-index uses the BWT for efficient searching in $T$. It consists of the following three main components:
\begin{itemize}
\item 
The BWT of $T$.
\item
The array $\Acc[0..\sigma-1]$ where $\Acc[i]$ holds the total number of symbols $a\leq i-1$ in $T$ (or equivalently, the total number of symbols $a\leq i-1$ in $B$).
\item
A sampled array $SAM_b$ for a sampling factor $b$: $SAM_b$ contains values of $SA[i]$ if and only if $SA[i]\!\!\mod b =0$ or $SA[i]=n-1$.  
\end{itemize}

The search for a substring $P$ of length $m$ is performed backwards: for $i=m-1,m-2,\ldots$, we identify the interval of $p[i..m]$ in the BWT. Let $B$ denote the BWT of $T$.  
Suppose that we know the interval $B[i_1..j_1]$ that corresponds to $p[i+1..m-1]$. Then the interval $B[i_2..j_2]$ that corresponds to $p[i..m-1]$ is computed as $i_2=\ra_c(i_1-1,B)+\Acc[c]$ and $j_2=\ra_c(i_2,B)+\Acc[c]-1$, where $c=P[i]$.  Thus the interval of $p$ is found by answering $2m$ $\ra$ queries. We observe that the interval of $p$ in $B$ is exactly the same as the interval of $p$ in the suffix array $SA$.

Another important component of an FM-index is the function $LF$, defined as follows: if $SA[j]=i+1$, then $SA[LF(j)]=i$. $LF$ can be computed by answering $\ra$ queries on $B$. Using $LF$ we can find the starting position of the $r$-th smallest suffix, $SA[r]$, in $O(b)$ applications of $LF$, where $b$ is the sampling factor; we refer to~\cite{NM06} for details. It is also possible to compute the function $\Psi$ by using $\sel$ queries the BWT~\cite{LeeP07}. Therefore the BWT can be viewed as a variant of the CSA.  Using $\Psi$ we can consecutively obtain positions of suffixes $T[i..]$ in the suffix array: Let $r_i$ denote the position of $T[i..]$ in $SA$. Since $T[n-1..]=\$$ is the smallest suffix, $r_0=\Psi(0)$. For $i\ge 1$, $r_i=\Psi(r_{i-1})$ by definition of $\Psi$. Hence we can consecutively compute each $r_i$ in $O(1)$ time if we have constant-time $\sel$ queries on the BWT.

\paragraph{Compressed Suffix Tree.}
A compressed suffix tree consists of the following components:
\begin{itemize}
\item
The compressed suffix array of $T$. We can use the FM-index as an implementation.   
\item
The suffix tree topology. This component can be stored in $4n+o(n)$ bits~\cite{Sadakane07}.
\item
The permuted LCP array, or PLCP. The longest common prefix array
$LCP$ is defined as follows: $LCP[r]=j$ if and only if the longest common prefix between the suffixes of rank $r$ and $r-1$ is of length $j$.  The permuted LCP array is defined as follows: $PLCP[i]=j$ if and only if the rank of $T[i..]$ is $r$ and $LCP[r]=j$. A careful implementation of $PLCP$ occupies $2n+o(n)$ bits~\cite{Sadakane07}.
\end{itemize}

\section{Monotone List Labelling with Batched Updates}
\label{sec:listlabel}
A direct attempt to dynamize the data structure of Section~\ref{sec:batchrank} encounters one significant difficulty.  An insertion of a new symbol $a$ into a chunk $C$ changes the positions of all the symbols that follow it. Since symbols are stored in pairs $(a_j,i)$ grouped by symbol, even a single insertion into $C$ can lead to a linear number of updates. Thus it appears that we cannot support the batch of  updates on $C$ in less than $\Theta(|C|)$ time.  In order to overcome this difficulty we employ a monotone labeling method and  assign labels to positions of symbols. Every position $i$ in the chunk is assigned an integer label $\lab(i)$ satisfying $0\le\lab(i)\le \sigma\cdot n^{O(1)}$ and $\lab(i_1)< \lab(i_2)$ if and only if $i_1< i_2$.  Instead of  pairs $(a,i)$ the sequence $R$ will contain pairs $(a,\lab(i))$. 

When a new element is inserted, we have to change the labels of some other elements in order to maintain the monotonicity of the labeling. Existing labeling schemes~\cite{Willard92,BCDFCZ02,DS87} require $O(\log^2 n)$ or $O(\log n)$ changes of labels after every insertion. In our case, however, we have to process large batches of insertions. We can also assume that at most $\log n$ batches need to be processed. In our scenario  $O(1)$ amortized  modifications per insertion can be achieved, as shown below. 

In this section we denote by $C$ an ordered  set that contains between $\sigma$ and $2\sigma$ elements. Let $x_1\le x_2\le\ldots\le x_t$ denote the elements of $C$. Initially we assign the label $\lab(x_i)=i\cdot d$ to the $i$-th smallest element $x_i$, where $d=4n$.  We associate an interval $[\lab(x_i),\lab(x_{i+1})-1]$ with $x_i$. Thus initially the interval of $x_i$ is   $[id,(i+1)d-1]$. We assume that $C$ also contains a dummy element $x_0=-\infty$ and $\lab(-\infty)=0$. Thus all labels are non-negative integers bounded by $O(\sigma\cdot n)$. 

Suppose that the $k$-th batch of insertions consists of  $m$ new elements $y_1\le y_2\le\ldots\le y_m$.  Since at most  $\log n$ batches of insertions must be supported, $1\le k\le\log n$. We say that an element $y_j$ is in an interval $I=[\lab(x_s),\lab(x_e)]$ if $x_s< y_j < x_e$. We denote by $new(I)$ the number of inserted elements in $I$. The parameter  $\rho(I)$ for an interval $I$ is defined as the ratio of old and new  elements in $I=[\lab(x_s),\lab(x_e)]$, $\rho(I)=\frac{e-s+1}{new(I)}$.  We identify the set of non-overlapping intervals $I_1$, $\ldots$, $I_r$ such that  every new element $y_t$ is in some interval $I_j$,  and  $1 \le \rho(I_j) \le 2$ for all $j$,  $1\le j\le r$. (This is always possible if $m \le |C|$; otherwise we simply merge the insertions with $C$ in $O(|C|+m)=O(m)$ time and restart all the labels.) We can find $I_1$, $\ldots$, $I_r$ in $O(m)$ time. For every $I_j$, $1\le j\le r$, we evenly distribute the labels of old and  new elements in the interval $I'_j\subseteq I_j$. Suppose that $f$ new elements $y_p$, $\ldots$, $y_{p+f-1}$ are inserted into interval $I_j=[\lab(x_s),\lab(x_e)]$ so that now  there are $v=f+(e-s)+1$ elements in this interval.  We assign the label $\lab(x_s)+ d_j\cdot(i-1)$ to the $i$-th smallest element in $I_j$ where
$d_j=\frac{\lab(x_e)-\lab(x_s)}{v-1}$.   By our choice of $I_j$,  $f\le e-s+1$ and the number of elements in $I_j$ increased at most by twofold.  Hence the minimal distance between two consecutive labels does not decrease by more than a factor of $2$ after insertion of new elements into $I_j$. We inserted $f$ new elements into $I_j$ and changed the labels of at most $2f$ old elements. Hence the amortized number of labels that we must change after every insertion is $O(1)$. The initial  distance between labels is $d=4n$ and this distance becomes at most two times smaller after every batch of insertions. Hence the distance between consecutive labels is an integer larger than 2 during the  first $\log n$ batches.

One remaining problem with our scheme is the large range of the labels. Since labels are integers bounded by $4|C|n$, we need $\Theta(\log\sigma+\log n)$ bits per label.  To solve this problem, we will split the  chunk $C$ into blocks and assign the same label to all the symbols in a block. A label assigned to the symbols in a block will be stored only once.  Details are provided in  Section~\ref{sec:batchdynseq}.

\section{Batched Rank Queries and Insertions on a Sequence}
\label{sec:batchdynseq}

In this section we describe a dynamic data structure that supports both batches of rank queries and batches of insertions.  First we describe how queries and updates on a chunk $C$ are supported.  

The linked list $L$ contains all the symbols of $C$ in the same order as they appear in $C$. Each node of $L$ stores a block of $\Theta(\log_{\sigma}n)$ symbols, containing at most $(1/4)\log_{\sigma}n$ of them. We will identify list nodes with the blocks they contain; however, the node storing block $b$ also stores the total number of symbols in all preceding blocks and a label $\lab(b)$ for the block. Labels are assigned to blocks with the method described in Section~\ref{sec:listlabel}. The pointer to (the list node containing) block $b$ will be called $p_b$; these pointers use $O(\log\sigma)$ bits. 

We also maintain a data structure that can answer rank queries on any block. The data structure for a block supports queries and insertions in $O(1)$ time using a look-up table: Since  $\sigma\le n^{1/4}$ and the block size is $(1/4)\log_{\sigma}n$, we can keep pre-computed answers to all rank queries for all possible blocks in a table
$Tbl[0..n^{1/4}-1][0..n^{1/4}-1][0..\log_\sigma n-1]$.  The entry $Tbl[b][a][i]$ contains the answer to the query $\ra_a(i,b)$ on a block $b$. $Tbl$ contains $O(n^{1/2}\log_{\sigma}n)=o(n)$ entries and can be constructed in $o(n)$ time. Updates can be supported by a similar look-up table or by bit operations on the block $b$.

We also use sequences $R$ and $R'$, defined in Section~\ref{sec:batchrank}, but we make the following modifications. 
For every occurrence $C[i]=a$ of a symbol $a$ in $C$, the sequence $R$ contains pair $(a,p_b)$, where $p_b$ is a pointer to the block $b$ of $L$ that contains $C[i]$. Pairs are sorted by symbol in increasing order, and pairs with the same symbol are sorted by their position in $C$.  Unlike in  Section~\ref{sec:batchrank}, the chunk $C$ can be updated and we cannot maintain the exact position $i$ of $C[i]$ for all symbols in $C$; we only maintain the pointers $p_b$ in the pairs $(a,p_b)\in R$. 

Note that we cannot use block pointers for searching in $L$ (or in $C$). Instead, block labels are monotonously increasing and $\lab(b_1)< \lab (b_2)$ if the block $b_2$ follows $b_1$ in $L$.  Hence block labels will be used for searching and answering rank queries. Block labels $\lab(b)$ use $\Theta(\log n)$ bits of space, so we store them only once with the list nodes $b$ and access them via the pointers $p_b$. 

Groups $H_{a,j}$ are defined as in Section~\ref{sec:batchrank}; each $H_{a,j}$ contains all the pairs of $R$ that are between two consecutive elements of $R'_a$ for some $a$.   The data structure $D_{a,j}$ that permits searching in $H_{a,j}$ is defined as follows. Suppose that $H_{a,j}$ contains pairs $(a,p_{b_1})$, $\ldots$, $(a,p_{b_f})$. We then keep a Succinct SB-tree data structure~\cite{GrossiORR09} on $\lab(b_1)$, $\ldots$, $\lab(b_f)$.  This data structure requires $O(\log\log n)$ additional bits per label. For any integer $q$, it can find the largest block label $\lab(b_i) < q$ in $O(1)$ time or count the number of blocks $b_i$ such that $\lab(b_i) < q$ in $O(1)$ time (because our sets $H_{a,r}$ contain a logarithmic number of elements). The search procedure needs to access one block label, which we read from the corresponding block pointer. 

\paragraph{Queries.}
Suppose that we want to answer queries $\ra_{a_1}(i_1,C)$, $\ra_{a_2}(i_2,C)$, $\ldots$, $\ra_{a_t}(i_t,C)$ on a chunk $C$.
We traverse all the blocks of $L$ and find for every $i_j$ the label $l_j$ of the block $b_j$ that contains the $i_j$-th symbol, $l_j=\lab(b_j)$. We also compute $r_{j,1}=\ra_{a_j}(i'_j,b_j)$ using $Tbl$, where $i'_j$ is the relative position of the $i_j$-th symbol in $b_j$. Since we know the total number of symbols in all the blocks that precede $b_j$, we can compute $i'_j$ in $O(1)$ time.

We then represent the queries by pairs $(a_j,l_j)$ and sort these pairs stably in increasing order of $a_j$. Then we traverse the list of query  pairs $(a_j,l_j)$ and the sequence $R'$. For every query $(a_j,l_j)$ we find the rightmost pair $(a_j,p_j)\in R'$ satisfying $\lab(p_j)\le l_j$. Let $r_{j,2}$ denote the rank of $(a_j,p_j)$ in $R_{a_j}$, i.e., the number of pairs $(a_j,i)\in R$ preceding $(a_j,p_j)$. We keep this information for every pair in $R'$ using $O(\log\sigma)$ additional bits.  Then we use the succinct SB-tree
$D_{a_j,p_j}$, which contains information about the pairs in $H_{a_j,p_j}$ (i.e., the pairs in the group starting with $(a_j,p_j)$).  
The structure finds in constant time the largest $\lab(b_g)\in D_{a_j,p_j}$ such that $\lab(b_g)< l_j$, as well as the number $r_{j,3}$ of pairs from the beginning of $H_{a_j,p_j}$ up to the pair with label $\lab(b_g)$. The answer to the $j$-th rank query is then $\ra_{a_j}(i_j,C)=r_{j,1}+r_{j,2}+r_{j,3}$.

The total query time is then $O(\sigma/\log_\sigma n + t)$.

\paragraph{Insertions.} 
Suppose that symbols $a_1$, $\ldots$, $a_t$ are to be inserted at positions $i_1$, $\ldots$, $i_t$, respectively.  We traverse the list $L$ and identify the nodes where new symbols must be inserted. We simultaneously update the information about the number of preceding elements, for all nodes. All this is done in time $O(\sigma/\log_\sigma n + t)$.
We also perform the insertions into the blocks. If, as a result, some block contains more than $(1/4)\log_{\sigma} n$ symbols, we split it into an appropriate 
number of blocks, so that each block contains $\Theta(\log_{\sigma}n)$ but at most $(1/4)\log_{\sigma}n$ symbols. 
Nodes for the new blocks are allocated\footnote{Constant-time allocation is
possible because we use fixed-size nodes, leaving the maximum possible space, 
$(1/4)\log n$ bits, for the block contents.}, linked to the list $L$, and 
assigned appropriate labels using the method described in 
Section~\ref{sec:listlabel}.  
After $t$ insertions, we create at most $O(t/\log_{\sigma}n)$ new blocks (in the
amortized sense, i.e., if we consider the insertions from the beginning). Each
such new block $b'$, coming from splitting an existing block $b$, requires that
we change all the corresponding pointers $p_b$ from the pairs $(a_z,p_b)$ in 
$R$ (and $R'$), so that they become $(a_z,p_{b'})$. To find those pairs
efficiently, the list node holding $b$ also contains the $O(\log_\sigma n)$ 
pointers to those pairs (using $O(\log\sigma)$ bits each); we can then update 
the required pointers in $O(t)$ total time.

The new blocks also require creating their labels. Those $O(t/\log_{\sigma}n)$ 
label insertions also trigger $O(t/\log_{\sigma}n)$ changes of other labels, with the technique of Section~\ref{sec:listlabel}. If the label of a block $b$ was changed, we visit all pairs $(a_z,p_{b})$ in $R$ that point to $b$. Each such $(a_z,p_b)$ is kept in some group $H_{a_z,k}$ and in some succinct SB-tree $D_{a_z,k}$. We then delete the old label of $b$ from $D_{a_z,k}$ and insert the new modified label. The total number of updates is thus bounded by $O(t)$. While not mntioned in the original paper \cite{GrossiORR09}, one can easily perform constant-time insertions and deletions of labels in a succinct SB-tree: 
The structure is a two-level B-tree of arity $\sqrt{\log n}$ holding encoded 
Patricia trees on the bits of the keys, and storing at the leaves the positions
of the keys in $H_{a,r}$ using $O(\log\log n)$ bits each. To insert or delete 
a label we follow the usual B-tree procedures. The insertion or deletion of a 
key in a B-tree node is done in constant time with a precomputed table that, 
in the same spirit of $Tbl$, yields the resulting Patricia tree if we delete 
or insert a certain node; this is possible because internal nodes store only
$O(\sqrt{\log n}\log\log n)=o(\log n)$ bits. Similarly, we can delete or insert
a key at the leaves of the tree.

Apart from handling the block overflows, we must insert in $R$ the pairs 
corresponding to the new $t$ symbols we are actually inserting. We perform
$t$ rank queries $\ra_{a_1}(i_1,C)$, $\ldots$, $\ra_{a_t}(i_t,C)$, just as 
described above, and sort the symbols to insert by those ranks using radix
sort. We then traverse $R'$ and identify the groups $H_{a_1,j_1}$, $\ldots$, 
$H_{a_t,j_t}$ where new symbols must be inserted; the counters of preceding
pairs for the pairs in $R'$ is easily updated in the way. We allocate the pairs
$(a_k,p_{b_k})$ that will belong to $H_{a_i,j_i}$ and insert the labels
$\lab(b_k)$ in the corresponding data structures $D_{a_k,j_k}$, for all
$1\le k\le t$. If some groups $H_{a_t,j_t}$ become larger than permitted, we 
split them as necessary and insert the corresponding pairs in $R'$. We can 
answer the rank queries, traverse $R$, and update the groups $H_{a_k,j_k}$ all
in $O(\sigma/\log_{\sigma}n + t)$ time. 

\paragraph{Global Sequence.}
In addition to chunk data structures, we keep a static bitvector $M_a=1^{d_1}0\ldots 1^{d_s}$ for every symbol $a$; $d_i$ denotes the number of times $a$ occurs in the $i$-th chunk.

Given a global sequence of $m\ge n/\log_{\sigma}n$  queries, $\ra_{a_1}(i_1,B)$, $\ldots$, $\ra_{a_m}(i_m,B)$ on $B$, we can assign them to chunks in $O(m)$ time. Then we answer queries on chunks as shown above.  If $m_j$ queries are asked on chunk $C_j$, then these queries are processed in $O(m_j+\sigma/\log_{\sigma}n)$ time.  Hence all queries on all chunks are answered in $O(m+n/\log_{\sigma}n)=O(m)$ time. We can answer 
a query $\ra_{a_k}(i_k,B)$ by answering a rank query on the chunk that contains $B[i_k]$ and $O(1)$ queries on the sequence $M_{a_k}$ \cite{GolynskiMR06}. Queries on $M_{a_k}$ are supported in $O(1)$ time because the bitvector is static. Hence the total time to answer $m$ queries on $B$ is $O(m)$.

When a batch of symbols is inserted, we update the corresponding chunks as described above. If some chunk contains more than $4\sigma$ symbols, we split it into several chunks of size $\Theta(\sigma)$ using standard techniques.  Finally we update the global sequences $M_a$, both because of the insertions and due to the possible chunk splits. We simply rebuild the bitvectors $M_a$ from scratch; this is easily done in $O(n_a/\log n)$ time, where $n_a$ is the number of bits in $M_a$; see e.g.~\cite{MunroNV14}. This adds up to $O(m/\log n)$ time.

Hence the total amortized cost for a batch of $m\ge n/\Delta$ insertions is $O(m)$.

\begin{theorem}
  \label{theor:batchdyn}
 We can keep a sequence $B[0..n-1]$ over an alphabet of size $\sigma$ in $O(n\log\sigma)$ bits of space so that a batch of $m$ $\ra$ queries 
can be answered in $O(m)$ time and a batch of $m$ insertions is supported in $O(m)$ amortized time, for $\frac{n}{\log_{\sigma}n}\le m\le n$.  
\end{theorem}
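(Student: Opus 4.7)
The plan is to dynamize the static construction behind Theorem~\ref{theor:batchstat}. The obstruction to making it dynamic is that the pairs $(a,i)\in R$ store absolute positions $i$ that shift with every insertion. I would replace each $i$ by a pointer $p_b$ to a small block $b$ of $\Theta(\log_\sigma n)$ symbols that contains the target symbol, and equip each block with a monotone label $\lab(b)$ produced by the scheme of Section~\ref{sec:listlabel}. Comparing two positions then reduces to comparing two labels, so an insertion touches $R$ only when its host block overflows and must be split.

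For a batch of $t$ queries on a chunk $C$, I would work in three passes. First, traverse the linked list of blocks once, in $O(\sigma/\log_\sigma n)$ time, locating for each query $\ra_{a_j}(i_j,C)$ the block $b_j$ containing position $i_j$ and computing the in-block rank $r_{j,1}$ via a precomputed table $Tbl$ indexed by block contents, symbol and local position. Second, sort the pairs $(a_j,\lab(b_j))$ stably by symbol and merge-scan with $R'$ to pin down the group $H_{a_j,p_j}$ responsible for each query, reading the cumulative count $r_{j,2}$ stored with the $R'$-entry. Third, query the succinct SB-tree $D_{a_j,p_j}$ on the target label to obtain the number $r_{j,3}$ of keys of $H_{a_j,p_j}$ at most $\lab(b_j)$. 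The answer is $r_{j,1}+r_{j,2}+r_{j,3}$, at total cost $O(t+\sigma/\log_\sigma n)$ per chunk.

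For a batch of $t$ insertions on a chunk $C$, I would locate the target blocks by an analogous list traversal, carry out the in-block insertions using $Tbl$, and split any block that exceeds $(1/4)\log_\sigma n$ symbols. Each new block requests a label from the scheme of Section~\ref{sec:listlabel}, which amortizes label changes to $O(1)$ per insertion; each altered label is propagated to the single $D_{a,r}$ that holds it, and each split block updates the $O(\log_\sigma n)$ back-pointers from $R$ that aimed at it. I would then insert the new pairs $(a_k,p_{b_k})$ into $R$ at the ranks obtained by the query procedure, splitting any $H_{a,r}$ that outgrows its bound and promoting new representatives into $R'$. At the global level, I would merge per-chunk updates, split any chunk that exceeds $4\sigma$ symbols with standard techniques, and rebuild each $M_a$ from scratch in $O(n/\log n)$ total time, which fits the budget because $m\ge n/\log_\sigma n$.

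The main obstacle lies in justifying two separate $O(1)$ claims. First, the labeling scheme must genuinely amortize to $O(1)$ relabelings per insertion; this relies on the fact that only $O(\log n)$ batches are processed before construction terminates, so labels stay within the $O(\log(\sigma n))$-bit budget and the minimum gap between consecutive labels stays above $2$. Second, the succinct SB-tree must support label insertion and deletion in $O(1)$ worst-case time, a feature not stated in~\cite{GrossiORR09}; I would supply it by precomputing tables that rewrite a B-tree node under an insertion or deletion in a single step, which is possible because each internal node encodes only $O(\sqrt{\log n}\log\log n)=o(\log n)$ bits of Patricia tree. Combining these arguments with the per-chunk cost of $O(t+\sigma/\log_\sigma n)$ and summing over chunks yields the claimed $O(m)$ bounds, provided $m\ge n/\log_\sigma n$.
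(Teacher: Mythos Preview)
Your proposal is essentially the paper's own proof: blocks of $\Theta(\log_\sigma n)$ symbols with monotone labels from Section~\ref{sec:listlabel}, pairs $(a,p_b)$ in $R$ with a sparse $R'$ and succinct SB-trees over the labels, the three-pass query (list scan for $r_{j,1}$, merge with $R'$ for $r_{j,2}$, SB-tree for $r_{j,3}$), block splitting with back-pointer updates, and global rebuild of the $M_a$ bitvectors. You also correctly flag and resolve the two nontrivial points the paper singles out---the $O(1)$ amortized relabeling hinging on at most $\log n$ batches, and the constant-time SB-tree updates via precomputed tables on the $o(\log n)$-bit node encodings---so the match is close. One small slip: a relabeled block $b$ is not reflected in a \emph{single} $D_{a,r}$ but in up to $O(\log_\sigma n)$ of them (one per symbol occurrence in $b$); you clearly know this from your next clause about back-pointers, but the sentence as written is inaccurate.
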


\section{Sequences with Partial Rank Operation}
\label{sec:partrank}
If $\sigma=\log^{O(1)}n$, then we can keep a sequence $S$ in $O(n\log\sigma)$ bits so that select and rank queries (including partial rank queries) are answered in constant time~\cite{FMMN07}.  In the remaining part of this section we will assume that 
$\sigma\ge \log^3 n$.

\begin{lemma}
  \label{lemma:partrank}
Let $\sigma\le m\le n$. We can support partial rank queries on a sequence $C[0..m-1]$ over an alphabet of size $\sigma$ in time $O(1)$. The data structure needs $O(m\log\log m)$ additional bits and can be constructed in $O(m)$ deterministic time. 
\end{lemma}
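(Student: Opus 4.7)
The plan is to build, for each alphabet symbol $a$, a monotone minimal perfect hash function (MMPHF) on the sorted set $P_a=\{\,i : C[i]=a\,\}$. Such an MMPHF, given any $i\in P_a$, returns the rank of $i$ in $P_a$ in $O(1)$ time while occupying $O(|P_a|\log\log m)$ bits. Summed over symbols, the total space is $O(m\log\log m)$ bits. A partial rank query $\ra_{C[i]}(i,C)$ then reduces to reading $a=C[i]$ and invoking the MMPHF of $a$ on $i$, which returns the position of $i$ inside $P_a$, i.e., the desired partial rank (up to $\pm 1$).

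I would proceed in three stages. First, I would collect each $P_a$ in sorted order by a stable radix sort on the pairs $(C[i],i)$; since both components are $O(\log m)$-bit integers, base-$\sqrt{m}$ radix sort runs in $O(m)$ deterministic time, after which each $P_a$ appears as a contiguous run. Second, for every $a$ I would construct the MMPHF on $P_a$ deterministically: partition $P_a$ by range of the high-order bits of its elements into buckets of target size $\Theta(\log^2 m)$, encode the occupied buckets with a compressed bitvector that contributes $O(\log\log m)$ bits per element, and resolve queries inside each bucket through universal look-up tables of aggregate size $o(m)$ bits, which are valid because a bucket's contents are encodable in $o(\log m)$ bits. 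Third, a query is answered in $O(1)$ by reading $C[i]$ from the stored representation of $C$, selecting the MMPHF of that symbol, and evaluating it.

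The main obstacle will be producing all the MMPHFs in deterministic linear time, since previous $O(m)$-time MMPHF constructions rely on randomized hashing to balance the bucket decomposition. I would sidestep this by using range partitioning in place of hash partitioning, which is deterministic and requires only a single linear scan of each sorted list $P_a$. Because range partitioning does not in itself guarantee balanced buckets, I would handle overfull buckets by recursively applying the same scheme to them; since the total number of elements across all recursion levels telescopes to $O(m)$ and the recursion depth is bounded by a constant (each level reduces the effective universe polynomially), the total construction cost remains $O(m)$. The universal look-up tables depend only on bucket size, not on $C$, so they can be precomputed once, deterministically, in $o(m)$ time, and shared across all symbols and all levels of the recursion.
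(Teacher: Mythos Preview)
Your proposal follows the monotone minimal perfect hashing route, which the paper explicitly avoids (it opens the proof by saying ``Our structure does not use monotone perfect hashing, however''). The paper instead stores, with every position $i$, two $O(\log\log m)$-bit numbers: the rank $r$ of $i$ inside its bucket $B_{C[i],t}$, and the \emph{length} $l_{C[i],t}$ of the longest common bit-prefix $p_{C[i],t}$ of the elements in that bucket. Since consecutive buckets of $I_a$ have pairwise distinct prefixes $p_{a,s}$, the pair $(a,\,p_{a,t})$ identifies the bucket; a query extracts the first $l_{C[i],t}$ bits of $i$ to get $p_{C[i],t}$, maps to a node $u_j$ of the full binary trie on $[0..m-1]$, and looks up $a$ in a small deterministic dictionary $G_j$ (Hagerup--Miltersen--Pagh) stored at $u_j$ to obtain $t$. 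All the $G_j$ together have $O(m/\log^2 m)$ entries and are built in $o(m)$ deterministic time.

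Your plan has two concrete gaps. First, the sentence ``a bucket's contents are encodable in $o(\log m)$ bits'' is not correct: a bucket holds $\Theta(\log^2 m)$ positions, so even one bit per position already exceeds $\log m$; hence the universal look-up table cannot be indexed by the bucket contents. (In fact no table is needed for the in-bucket part: one can simply store the in-bucket rank $r$ with each element, as the paper does.) Second, the claim that the recursion for overfull range buckets has constant depth is not justified. With range partitioning into $N/\log^2 m$ equal ranges, an overfull bucket may contain, say, $2\log^2 m$ elements in a sub-universe of size $U\log^2 m/N$; the universe then shrinks by only a constant factor per level, giving $\Theta(\log m)$ levels in the worst case. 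That defeats both the $O(1)$ query time (a query must descend through the levels) and the linear construction bound. The paper's LCP-based bucketing sidesteps both problems: it needs no recursion, and the only nontrivial construction step is building the $G_j$ dictionaries, whose total size is $O(m/\log^2 m)$, so even an $O(k\log\sigma)$-time deterministic dictionary construction stays within $O(m)$.
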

\begin{proof}
Our method employs the idea of buckets introduced in~\cite{BelazzouguiBPV09}. Our structure does not use monotone perfect hashing, however. Let $I_a$ denote the set of positions where a symbol $a$ occurs in $C$, i.e., $I_a$ contains all integers  $i$ satisfying $C[i]=a$. If $I_a$ contains more than $2\log^2m$ integers, we divide $I_a$ into buckets $B_{a,s}$  of size $\log^2 m$.  Let $p_{a,s}$ denote the longest common prefix of all integers (seen as bit strings) in the bucket $B_{a,s}$ and let $l_{a,s}$ denote the length of $p_{a,s}$. 
For every element $C[i]$ in the sequence we keep the value of $l_{C[i],t}$ where $B_{C[i],t}$ is the bucket containing $i$. If $I_{C[i]}$ was not divided into buckets, we assume $l_{C[i],t}=$ {\em null}, a dummy value.  We will show below how the index $t$ of  $B_{C[i],t}$ can be identified if $l_{C[i],t}$ is known.  For every symbol $C[i]$ we also keep the  rank $r$ of $i$ in its bucket $B_{C[i],t}$. That is, for every $C[i]$ we store the value of $r$ such that  $i$ is the $r$-th smallest element in its bucket $B_{C[i],t}$. Both $l_{C[i],t}$ and $r$ can be stored in $O(\log\log m)$ bits. The partial  rank of $C[i]$ in $C$ can be computed from $t$ and $r$,  $\ra_{C[i]}(i,C)=t\log^2m +r$.

It remains to describe how the index $t$ of the bucket containing $C[i]$ can be found. Our method uses $o(m)$ additional bits.  First we observe that $p_{a,i}\not= p_{a,j}$ for any fixed $a$ and $i\not=j$; see~\cite{BelazzouguiBPV09} for a proof. 
Let $T_w$ denote the full binary trie on the interval $[0..m-1]$.  
Nodes of $T_w$ correspond to all possible bit prefixes of integers $0,\ldots,m-1$.
We say that a bucket $B_{a,j}$ is assigned to a node $u\in T_w$ if 
$p_{a,j}$ corresponds to the node $u$. Thus  many different buckets can be assigned to the same  node $u$. But for any symbol $a$ at most one bucket $B_{a,k}$ is assigned to $u$. If a bucket is assigned to a node $u$, then there are at least $\log^2 m$ leaves below $u$.  Hence buckets can be  assigned to nodes of height at least $2\log\log m$; such nodes will be further called bucket  nodes.  We store all buckets assigned to bucket nodes of $T_w$ using the structure described below. 

We order the nodes $u$ level-by-level starting at the top of the tree. 
Let $m_j$ denote the number of buckets assigned to $u_j$. The data structure $G_j$ contains all symbols $a$ such that some bucket $B_{a,k_a}$ is assigned to $u_j$. For every symbol $a$ in $G_j$ we can find in $O(1)$ time the index $k_a$ of the bucket $B_{a,k_a}$ 
that is assigned to $u_j$. We implement $G_j$ as deterministic dictionaries of Hagerup et al.~\cite{HagerupMP01}. $G_j$ uses 
$O(m_j\log \sigma)$ bits and can be constructed in $O(m_j\log\sigma)$ time.  
We store $G_j$ only for bucket nodes $u_j$ such that $m_j>0$.  We also  keep  an  array $W[1..\frac{m}{\log^2 m}]$ whose entries correspond to bucket nodes of $T_w$: $W[j]$ contains a pointer to $G_j$ or {\em null} if $G_j$ does not exist.

Using $W$ and $G_j$ we can answer  a partial rank query $\ra_{C[i]}(i,C)$. Let $C[i]=a$.  Although the bucket $B_{a,t}$ containing $i$ is not known, we know the length $l_{a,t}$ of the prefix $p_{a,t}$. Hence $p_{a,t}$ can be computed by extracting the first $l_{a,t}$ bits of $i$. We can then find the index $j$ of the node $u_j$ that corresponds to $p_{a,t}$, $j=(2^{l_{a,t}}-1)+p_{a,t}$. We lookup the address of the data structure $G_j$ in $W[j]$. Finally the index $t$ of the bucket $B_{a,t}$ is computed as $t=G_j[a]$.    

A data structure $G_j$ consumes $O(m_j\log m)$ bits. Since $\sum_j m_j\le \frac{m}{\log^2 m}$, all $G_j$ use $O(m/\log m)$ bits of space. The array $W$ also uses $O(m/\log m)$ bits. Hence our data structure uses $O(\log\log m)$ additional bits per symbol.
\end{proof}

\begin{theorem}
  \label{theor:partrank}
We can support partial rank queries on a sequence $B$ using $O(n\log\log \sigma)$ additional bits. The underlying data structure can be constructed in $O(n)$ deterministic time. 
\end{theorem}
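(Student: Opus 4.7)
The plan is to reduce the theorem to Lemma~\ref{lemma:partrank} by a straightforward chunking argument, together with the accumulator-bitvectors already used in Section~\ref{sec:batchrank}. Following the paragraph preceding the statement, I may assume $\sigma\ge\log^3 n$; the complementary case $\sigma<\log^3 n$ is handled by the structure of~\cite{FMMN07}, which answers all rank queries (in particular, partial rank) in $O(1)$ time using $O(n\log\sigma)$ bits and can be built in $O(n)$ time.

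First I would partition $B$ into $\lceil n/\sigma\rceil$ contiguous chunks $C_0,C_1,\ldots$ of length $\sigma$ (padding the last one if necessary) and build, for every chunk $C_j$, the partial-rank structure of Lemma~\ref{lemma:partrank} instantiated with $m=\sigma$. Each per-chunk structure occupies $O(\sigma\log\log\sigma)$ additional bits and is built in $O(\sigma)$ deterministic time, so summing over the $n/\sigma$ chunks gives $O(n\log\log\sigma)$ additional bits and $O(n)$ total construction time.

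Second, to lift local partial ranks to global ones I reuse the per-symbol bitvectors $M_a=1^{d_1}01^{d_2}0\cdots 01^{d_f}$ from Section~\ref{sec:batchrank}, where $f=\lceil n/\sigma\rceil$ and $d_k$ is the number of occurrences of symbol $a$ in chunk $C_{k-1}$. Collectively the $M_a$ take $O(n)$ bits, they can be assembled in $O(n)$ time by a single left-to-right scan of $B$ that increments per-symbol chunk counters, and a linear-time preprocessing makes them support $\sel_0$ in $O(1)$ time.

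To answer a partial rank query $\ra_{B[i]}(i,B)$, I set $a=B[i]$ (obtained in $O(1)$ time by a plain access to $B$), $j=\lfloor i/\sigma\rfloor$, and $i'=i-j\sigma$. I compute the local partial rank $r_{\mathrm{loc}}=\ra_a(i',C_j)$ in $O(1)$ time using the structure of Lemma~\ref{lemma:partrank} on $C_j$, and the number of occurrences of $a$ in the preceding chunks as $r_{\mathrm{pre}}=\sel_0(j,M_a)-j$ in $O(1)$ time, returning $r_{\mathrm{loc}}+r_{\mathrm{pre}}$. Since the query is \emph{partial} rank, the symbol $a$ is dictated by $B[i]$ itself, so the per-chunk Lemma applies verbatim. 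There is no real obstacle beyond checking these hypotheses; the heavy lifting is already inside Lemma~\ref{lemma:partrank}, and chunking simply converts its $\log\log m$ overhead into the desired $\log\log\sigma$ by choosing $m=\sigma$.
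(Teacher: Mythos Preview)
Your proposal is correct and follows essentially the same approach as the paper: chunk $B$ into pieces of size~$\sigma$, apply Lemma~\ref{lemma:partrank} with $m=\sigma$ to each chunk, and glue the chunk answers together with the per-symbol bitvectors $M_a$ of Section~\ref{sec:batchrank}. The paper's own proof is a three-sentence sketch of exactly this reduction; your write-up merely spells out the arithmetic for space, construction time, and the query formula.
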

\begin{proof}
  We divide the sequence $B$ into chunks of size $\sigma$ (except for the last chunk that contains $n-(\floor{n/\sigma}\sigma)$ symbols).  Global sequences $M_a$ are defined in the same way as in Section~\ref{sec:batchrank}. A partial rank query on $B$ can be answered by a partial rank query on a chunk and two queries on $M_a$. 
\end{proof}

\section{Reporting All Symbols in a Range}
\label{sec:colrep}

We prove the following lemma in this section.

\begin{lemma}
\label{lemma:colrep}
Given a sequence $B[0..n-1]$ over an alphabet $\sigma$, we can build in $O(n)$ time a data structure that uses $O(n\log\log \sigma)$ additional bits and answers the following queries: for any range $[i..j]$, report $\occ$ distinct symbols that occur in $B[i..j]$ in $O(\occ)$ time, and for every reported symbol $a$, give its frequency in $B[i..j]$ and its frequency in $B[0..i-1]$.   
\end{lemma}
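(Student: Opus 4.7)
The plan is to apply Muthukrishnan's colored-range technique to $B$ without storing the ``previous-occurrence'' array explicitly, by computing it on the fly from partial rank and select. For each position $k$, define implicitly $D[k]$ as the position of the previous occurrence of $B[k]$ in $B$ (or $-\infty$ if none), and $D'[k]$ as the position of the next occurrence (or $+\infty$). Given $O(1)$-time access to $B$, $O(1)$-time partial rank on $B$ (Theorem~\ref{theor:partrank}), and $O(1)$-time $\sel$ on $B$ (available in the GMR-style layout of Section~\ref{sec:batchrank}), we can evaluate $D[k]=\sel_{B[k]}(\ra_{B[k]}(k,B)-1,B)$ and $D'[k]=\sel_{B[k]}(\ra_{B[k]}(k,B)+1,B)$ on demand in constant time; no explicit storage of $D$ or $D'$ is needed.

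First, I would build, in $O(n)$ time, the min-Cartesian tree $\cC_D$ of the implicit array $D$ and the max-Cartesian tree $\cC_{D'}$ of $D'$, each encoded in balanced-parenthesis form and augmented with the standard $o(n)$-bit index supporting constant-time LCA and hence constant-time RMQ on the underlying array. This costs $O(n)$ bits in total, and the usual stack-based left-to-right construction works as soon as we can read $D[k]$ (resp.\ $D'[k]$) in $O(1)$ time, which we can by the preceding paragraph.

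To answer a query on $[i..j]$, I would run the classical colored-range recursion on $\cC_D$: locate the position $k^\ast$ of the minimum of $D$ in $[i..j]$ via RMQ; compute $D[k^\ast]$ in $O(1)$; if $D[k^\ast]\ge i$ stop, otherwise report $B[k^\ast]$ and recurse on $[i..k^\ast{-}1]$ and $[k^\ast{+}1..j]$. A standard charging argument shows that this enumerates each distinct symbol of $B[i..j]$ exactly once, at its \emph{first} occurrence, in $O(\occ)$ total time. Analogously, the max-recursion on $\cC_{D'}$ produces the \emph{last} occurrence $k^{\ast\ast}$ of every distinct symbol, also in $O(\occ)$ time. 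To pair each first occurrence with its corresponding last occurrence, I would use an auxiliary array $A[0..\sigma-1]$ of $O(\sigma\log n)=o(n)$ bits (since $\sigma\le n^{1/4}$), initialized once at preprocessing and cleared after each query only at the $O(\occ)$ indices we touched, recorded in a small scratch list.

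Finally, the two required frequencies follow immediately by constant-time partial ranks: since $k^\ast$ is the first occurrence of $a=B[k^\ast]$ in the range, the frequency of $a$ in $B[0..i-1]$ equals $\ra_a(k^\ast,B)-1$, and the frequency of $a$ in $B[i..j]$ equals $\ra_a(k^{\ast\ast},B)-\ra_a(k^\ast,B)+1$. Total additional space is $O(n)$ bits for the two Cartesian trees plus $O(n\log\log\sigma)$ bits for the partial-rank index of Theorem~\ref{theor:partrank}, and total query time is $O(\occ)$. The main obstacle, I expect, is cleanly bounding the pairing step: the argument depends on the ability to initialize and maintain the $\sigma$-sized matching array within the stated $O(n\log\log\sigma)$-bit budget and $O(\occ)$ query time, which is precisely what the paper's alphabet assumption $\sigma\le n^{1/4}$ is there to guarantee.
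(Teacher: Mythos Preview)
Your proposal is correct and follows essentially the same approach as the paper: use Sadakane's/Muthukrishnan's colored-range enumeration via an RMQ on the previous-occurrence array to obtain the leftmost (and, symmetrically via the next-occurrence array, the rightmost) occurrence of each distinct symbol in the range, then turn these two positions into the required frequencies with two partial-rank queries per symbol via Theorem~\ref{theor:partrank}. The paper's proof is a sketch that defers the enumeration and leftmost/rightmost machinery to Sadakane and to~\cite{BelazzouguiNV13}; you simply spell those details out (the two Cartesian trees, on-the-fly evaluation of $D[k]$ using partial rank plus select, and the $\sigma$-sized pairing array), but the substance is identical.
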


The proof 
is the same as that of Lemma 3 in~\cite{Belaz14}, but we use the result of Theorem~\ref{theor:partrank} to answer partial rank queries. This allows us to construct the data structure in $O(n)$ deterministic time (while the data structure in~\cite{Belaz14} achieves the same query time, but the construction algorithm requires randomization).  For completeness we sketch the proof below. 

Augmenting $B$ with $O(n)$ additional bits, we can report  all distinct symbols occurring in $B[i..j]$ in $O(\occ)$ time using the idea originally introduced by Sadakane~\cite{Sadakane07doc}.  For every reported symbol we can find in $O(1)$ time its leftmost and its rightmost occurrences  in $B[i..j]$. Suppose $i_a$ and $j_a$ are the leftmost and rightmost occurrences of $a$ in $B[i..j]$. Then the frequencies of $a$ in $B[i..j]$ and $B[0..i-1]$ can be computed as $\ra_a(j_a,B)-\ra_a(i_a,B)+1$ and $\ra_a(i_a,B)-1$ respectively.  Since $\ra_a(i_a,B)$ and $\ra_a(j_a,B)$ are partial rank queries, they are answered in $O(1)$ time. The data structure that reports the leftmost and the rightmost occurrences can be constructed in $O(n)$ time. Details and references can be found in~\cite{BelazzouguiNV13}. Partial rank queries are answered by the data structure of Theorem~\ref{theor:partrank}. Hence the data structure of Lemma~\ref{lemma:colrep} can be built in $O(n)$ deterministic time. We can also use the data structure of Lemma~\ref{theor:partrank} to determine whether the range $B[i..j]$ contains only one distinct symbol in $O(1)$ time by using the following observation. If $B[i..j]$ contains only one symbol, then $B[i]=B[j]$ and  $\ra_{B[i]}(j,B)-\ra_{B[i]}(i,B)=j-i+1$. Hence we can find out whether $B[i..j]$ contains exactly one symbol in $O(1)$ time by answering two partial rank queries. This observation will be helpful in Section~\ref{sec:permlcp}.


\section{Computing the Intervals}
\label{sec:intervals}
The algorithm for constructing PLCP, described in Section~\ref{sec:permlcp}, requires that we compute the intervals  of $T[j\Delta'..j\Delta'+\ell_i]$ and 
$\overline{T[j\Delta'..j\Delta'+\ell_{i}]}$  for $i=j\Delta'$ and $j=0,1,\ldots, n/\Delta'$.
We will show in this section how all necessary intervals can be computed in linear time when  $\ell_i$ for $i=j\Delta'$ are known. Our algorithm uses the suffix tree topology. We construct some additional data structures and pointers for selected  nodes of the suffix tree $\cT$. First, we will describe auxiliary data structures on $\cT$. Then we show how these structures can be used to find all needed intervals in linear time. 

\paragraph{Marking Nodes in a Tree.}
We use the  marking scheme described in~\cite{NavarroN12}. Let $d=\log n$.  A node $u$ of $\cT$ is \emph{heavy} if it has at least $d$ leaf descendants and \emph{light} otherwise.  We say that a heavy node $u$ is a \emph{special} or marked node if $u$ has at least two heavy children.  If a non-special heavy node $u$ has more than $d$ children and among them is one heavy child, then we keep the index of the heavy child in $u$.  

We keep all children of a node $u$ in the data structure $F_u$, so that the child of $u$ that is labeled by a symbol $a$ can be found efficiently. 
If $u$ has at most $d+1$ children, then $F_u$ is implemented as a fusion tree~\cite{FW94}; we can find the child of $u$ labeled by any symbol $a$ in $O(1)$ time. 
If $u$ has more than $d+1$ children, then  $F_u$ is implemented as the van Emde Boas data structure and we can find the child labeled by $a$ in $O(\log\log \sigma)$ time. If the node $u$ is special, we keep labels of its heavy children in the data structure $D_u$.  $D_u$ is implemented as a dictionary data structure~\cite{HagerupMP01} so that we can find any heavy child of a special node in $O(1)$ time.  We will say that a node $u$ is \emph{difficult} if $u$ is light but the parent of $u$ is heavy. We can quickly navigate from a node $u\in \cT$ to its child $u_i$  unless the node $u_i$ is difficult. 
\begin{proposition}
\label{prop:descendant}
   We can find  the child $u_i$ of $u$ that is labeled with a symbol $a$ in $O(1)$ time unless the node $u_i$ is difficult.   If $u_i$ is difficult, we can find $u_i$ in $O(\log\log \sigma)$ time. 
\end{proposition}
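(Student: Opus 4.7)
The plan is to proceed by a case analysis on the node $u$ and on which branch of the $F_u$ construction applies. First, if $u$ has at most $d+1$ children, then $F_u$ is a fusion tree and a child-by-label lookup takes $O(1)$ regardless of the nature of $u_i$. This case already subsumes the non-difficult situation when $u$ itself is light, because a light $u$ has fewer than $d$ leaf descendants and hence fewer than $d$ children.

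Suppose next that $u$ has more than $d+1$ children. Then, since every child subtree contributes at least one leaf, $u$ has more than $d+1$ leaf descendants and is therefore heavy. The structure $F_u$ is now a van Emde Boas tree, giving the $O(\log\log\sigma)$ bound claimed for difficult children. It remains to show that when $u_i$ is \emph{not} difficult, i.e., $u_i$ is heavy, we still achieve $O(1)$. Split into two subcases. If $u$ is special, we consult its dictionary $D_u$ of heavy-child labels in $O(1)$; if $D_u$ reports a hit on $a$, we return the corresponding heavy child in $O(1)$, and otherwise $u_i$ must be light, hence difficult, and we fall back on $F_u$ in $O(\log\log\sigma)$. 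If $u$ is non-special then, since $u$ is heavy with more than $d$ children, we have by construction stored the (index and label of the) unique heavy child; a single $O(1)$ comparison of $a$ against this label tells us whether to return the stored heavy child (hit, $O(1)$) or to resort to the vEB $F_u$ for a difficult light child (miss, $O(\log\log\sigma)$).

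Combining the two cases, every lookup whose target $u_i$ is heavy, or whose parent $u$ is light, finishes in $O(1)$; a lookup whose target is difficult (light under a heavy parent) falls through to the vEB $F_u$ and finishes in $O(\log\log\sigma)$. This yields exactly the statement of the proposition.

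The only real obstacle is a bookkeeping one: one must verify that enough information is stored at non-special heavy nodes to decide in $O(1)$ whether the query label $a$ matches the unique heavy child without having to consult the vEB structure; storing the child's label alongside its index (an extra $O(\log\sigma)$ bits per such node, well within the overall budget) suffices. Once this is in place, the other implications---that having $>d+1$ children forces $u$ to be heavy, and that a light child of a heavy parent is by definition difficult---are immediate from the marking scheme.
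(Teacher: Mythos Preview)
Your proof is correct and follows essentially the same case analysis as the paper, just sliced differently: the paper splits first on whether $u_i$ is heavy or light and then on the structure of $u$, whereas you split first on the degree of $u$ and then on whether $u$ is special; the resulting cases and tools ($F_u$ as fusion tree or vEB, $D_u$, the stored unique heavy child) coincide. Your observation about needing the \emph{label} of the unique heavy child, not just its index, is a fair bookkeeping point that the paper leaves implicit.
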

\begin{proof}
   Suppose that $u_i$ is heavy. If  $u$ is special, we can find $u_i$ in $O(1)$ time using $D_u$.  If $u$ is not special and it has at most $d+1$ children, then we find $u_i$ in $O(1)$ time using $F_u$. If $u$ is not special and it has more than $d+1$ children, then $u_i$ is the only heavy child  of $u$ and its index $i$ is stored with the node $u$.  Suppose that $u_i$ is light and $u$ is also light. Then $u$ has at most $d$ children and we can find $u_i$ in $O(1)$ time using $F_u$. 
If $u$ is heavy and $u_i$ is light, then $u_i$ is a difficult node. In this case we can find the index $i$ of $u_i$ in $O(\log\log \sigma)$ time using $F_u$. 
\end{proof}
\begin{proposition}
\label{prop:path}
 Any path from a node $u$ to its descendant $v$ contains at most one difficult node. 
\end{proposition}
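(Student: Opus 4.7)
The plan is to argue by contradiction using a simple monotonicity observation: the property of being heavy can only fail downward, so the descendants of a light node are all light.

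First I would record this monotonicity formally. If a node $w$ has fewer than $d=\log n$ leaf descendants, then any descendant $w'$ of $w$ has at most as many leaf descendants as $w$, hence strictly fewer than $d$. Thus $w'$ is also light. In particular, every proper descendant of a light node is light.

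Next, suppose for contradiction that the path from $u$ down to $v$ contains two distinct difficult nodes $w_1$ and $w_2$, with $w_1$ a (strict) ancestor of $w_2$ on this path. By definition of ``difficult'', $w_1$ is light and $w_2$ is light with $\mathtt{parent}(w_2)$ heavy. The parent of $w_2$ lies on the path from $w_1$ to $w_2$, so it is either $w_1$ itself or a proper descendant of $w_1$. In the first case $\mathtt{parent}(w_2)=w_1$ is light by assumption, and in the second case $\mathtt{parent}(w_2)$ is light by the monotonicity observation above. Either way $\mathtt{parent}(w_2)$ is light, contradicting the fact that $w_2$ is difficult.

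There is essentially no obstacle here; the lemma is really just an unwinding of the definitions together with the fact that light subtrees stay light. The only care needed is to handle the boundary case where the second difficult node's parent coincides with the first difficult node itself, which the argument above covers.
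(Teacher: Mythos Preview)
Your proof is correct and uses the same key observation as the paper: every descendant of a light node is light. The paper phrases the argument as a direct case analysis (on whether $u$ is heavy and $v$ is light) rather than by contradiction, but the content is identical; if anything, your contradiction argument handles the boundary cases more explicitly than the paper's somewhat terse ``there are apparently no difficult nodes between $u$ and $v$''.
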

\begin{proof}
  Suppose that a node $u$ is  a heavy node and its descendant $v$ is a light node.  Let $u'$ denote the first light node on the path from $u$ to $v$. Then all descendants of $u'$ are light nodes and $u'$ is the only difficult node on the path from $u$ to $v$. If $u$ is light or $v$ is heavy, then there are apparently no difficult nodes between $u$ and $v$. 
\end{proof}

\paragraph{Weiner Links.}
A Weiner link (or w-link) $\wlink(v,c)$ connects a node $v$ of the suffix tree $\cT$ labeled by the path $p$ to the node $u$, such that $u$ is the locus of $cp$. 
If $\wlink(v,c)=u$ we will say that $u$ is the target node and $v$ is the source of $\wlink(v,c)$ and $c$ is the label of $\wlink(v,c)$. 
If the target node $u$ is labeled by $cp$, we say that the w-link is explicit. If $u$ is labeled by some path $cp'$, such that $cp$ is a proper prefix of $cp'$, then the Weiner link is implicit. We classify Weiner links using the same technique that was applied to nodes of the suffix tree above.  Weiner links that share the same source node are called sibling links. A Weiner link from $v$ to $u$ is \emph{heavy} if the node $u$ has at least $d$ leaf descendants and \emph{light} otherwise.  A node $v$ is \emph{w-special} iff there are at least two heavy w-links connecting  $v$ and some other nodes. 
For every special node $v$ the dictionary $D'_v$ contains the labels $c$ of all heavy w-links $\wlink(v,c)$. For every $c$ such that $\wlink(v,c)$ is heavy, we also keep the target node $u=\wlink(v,c)$. $D'_v$ is implemented as in \cite{HagerupMP01} so that queries are answered in $O(1)$ time.  Suppose that $v$ is the source node of  at least $d+1$ w-links, but $u=\wlink(v,c)$ is the only heavy link that starts at $v$. In this case we say that $\wlink(v,c)$ is \emph{unique} and we store the index of $u$ and the symbol $c$ in $v$. Summing up, we store only heavy w-links that start in a w-special node or  unique w-links. All other w-links are not stored explicitly; if they are needed, we compute them using additional data structures that will be described below. 

Let $B$ denote the BWT of $T$. We split $B$ into intervals $G_j$ of size $4d^2$. 
For every $G_j$ we keep the dictionary $A_j$ of symbols that occur in $G_j$. 
For each symbol $a$ that occurs in  $G_j$, the data structure $G_{j,a}$ contains all positions of $a$ in $G_j$.  Using $A_j$, we can find out whether a symbol $a$ occurs in $G_j$. Using $G_{j,a}$, we can find for any position $i$  the smallest $i'\ge i$ such that $B[i']=a$ and $B[i']$ is in $G_j$ (or the largest $i''\le i$ such that $B[i'']=a$ and $B[i'']$ is in $G_j$).   We implement both $A_j$ and $G_{j,a}$ as fusion trees~\cite{FW94} so that queries are answered in $O(1)$ time. Data structures $A_j$ and $G_{j,a}$ for a fixed $j$ need $O(d^2\log\sigma)$ bits. We also keep (1) the data structure from~\cite{GolynskiMR06} that supports $\sel$ queries on $B$ in $O(1)$ time and rank queries on $B$ in $O(\log\log\sigma)$ time and (2) the data structure from Theorem~\ref{theor:partrank} that supports partial rank queries in $O(1)$ time.  All additional data structures on the sequence $B$ need $O(n\log\sigma)$ bits. 

\begin{proposition}
\label{prop:heavyspec}
The total number of heavy w-links that start in  w-special nodes is $O(n/d)$.    
\end{proposition}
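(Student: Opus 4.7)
The plan is to set up a duality between heavy w-links emanating from w-special nodes of $\cT$ and heavy children of special nodes of the reverse suffix tree $\ocT$, and then bound the latter by the standard heavy-subtree counting argument already used for special nodes in $\cT$.

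First I would establish the duality. Let $v\in\cT$ have path-label $p$, and let $\bar v$ denote the locus of $\bar p$ in $\ocT$. A w-link $\wlink(v,c)=u$ exists iff $cp$ occurs in $T$, equivalently iff $\bar p c$ occurs in $\oT$; and its target $u$ is heavy in $\cT$ iff $cp$ has at least $d$ occurrences in $T$, iff $\bar p c$ has at least $d$ occurrences in $\oT$, iff the unique child of $\bar v$ in $\ocT$ whose outgoing edge starts with $c$ has at least $d$ leaves (is heavy in $\ocT$). Moreover, $v$ has at least two w-links iff $p$ has at least two left extensions in $T$, iff $\bar p$ has at least two right extensions in $\oT$, iff $\bar v$ is an explicit branching node of $\ocT$. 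Consequently, $v$ is w-special iff $\bar v$ is a special node of $\ocT$ (heavy with at least two heavy children), and the heavy w-links leaving $v$ are in bijection with the heavy children of $\bar v$ in $\ocT$.

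Given the duality, it remains to count heavy children of special nodes of $\ocT$. The heavy nodes form a subtree $\mathcal H$ of $\ocT$ closed under ancestors. Partition $\mathcal H$ into $\mathcal L$ (heavy nodes with no heavy child), $\mathcal C$ (exactly one heavy child) and $\mathcal S$ (at least two heavy children, i.e.\ special). Nodes of $\mathcal L$ have pairwise disjoint subtrees each containing at least $d$ $\ocT$-leaves, so $|\mathcal L|\le n/d$; being branching nodes of the heavy subtree, $|\mathcal S|\le |\mathcal L|-1$. Every non-root heavy node is a heavy child of its (necessarily heavy) parent, so the number of heavy parent-child pairs equals $|\mathcal H|-1=|\mathcal L|+|\mathcal C|+|\mathcal S|-1$; subtracting the contribution of chain nodes (one each) gives
\[
\sum_{s\in\mathcal S}(\text{heavy children of }s) \;=\; |\mathcal L|+|\mathcal S|-1 \;=\; O(n/d).
\]
By the bijection of the first step this sum is exactly the total number of heavy w-links originating at w-special nodes of $\cT$, proving the proposition.

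The main obstacle is ensuring the duality is stated precisely: one must verify that the ``heavy child of $\bar v$ in $\ocT$'' referred to above is well-defined (i.e.\ $\bar v$ is an explicit node whenever $v$ has at least two w-links) and that the counts of occurrences of $cp$ and $\bar p c$ agree with the leaf counts of the relevant subtrees in $\ocT$; both facts follow from the symmetry between substrings of $T$ and $\oT$. Once this correspondence is in place, the counting step is the textbook argument for special nodes of a suffix tree.
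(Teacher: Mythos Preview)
Your proof is correct and follows essentially the same route as the paper: both establish that a w-special node $v$ with label $p$ corresponds to a special node $\bar v$ (the locus of $\bar p$) in $\ocT$, and that each heavy w-link $\wlink(v,c)$ corresponds to a heavy child of $\bar v$ whose edge starts with $c$, then invoke the $O(n/d)$ bound on heavy children of special nodes in a suffix tree. The only difference is that you spell out the heavy-subtree counting argument for that last bound, whereas the paper simply cites it as known.
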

\begin{proof}
  Suppose that $u$ is a w-special node and let $p$ be the label of $u$. Let $c_1$, $\ldots$, $c_s$ denote the labels of heavy w-links with source node $u$. This means that each $c_1p$, $c_2p$, $\ldots$, $c_sp$ occurs at least $d$ times in 
$T$.  Consider the suffix tree $\ocT$ of the reverse text $\oT$. $\ocT$ contains the node $\overline{u}$ that is labeled with $\overline{p}$.  The node $\overline{u}$ has (at least) $s$ children $\overline{u}_1$, $\ldots$, $\overline{u}_s$. The edge connecting $\overline{u}$ and $\overline{u}_i$ is a string that starts with $c_i$. In other words each $\overline{u}_i$ is the locus of $\overline{p}c_i$.  Since $c_ip$ occurs at least $d$ times in $T$, $\overline{p}c_i$ occurs at least $d$ times in $\oT$. Hence each $\overline{u}_i$ has at least $d$ descendants.  Thus every w-special node in $\cT$ correspond to a special node in $\ocT$ and every heavy w-link outgoing from a w-special node corresponds to some heavy child of a special node in $\ocT$. Since the number of heavy children of special nodes in a suffix tree is $O(n/d)$, the number of heavy w-links starting in a w-special node is also $O(n/d)$.  
\end{proof}

\begin{proposition}
\label{prop:heavy2}
The total number of unique w-links is $O(n/d)$.  
\end{proposition}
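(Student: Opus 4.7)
The plan is to mirror the argument used in Proposition \ref{prop:heavyspec}, which exploits the correspondence between outgoing w-links from a node $v\in\cT$ and children of the ``mirror'' node $\overline{v}\in\ocT$. Specifically, if $v$ is labeled by the path $p$, then $\overline{v}\in\ocT$ is labeled by $\overline{p}$, and the outgoing w-links $\wlink(v,c)$ for the various symbols $c$ are in bijection with the children of $\overline{v}$, the child corresponding to $\wlink(v,c)$ being the locus of $\overline{p}c$. Under this bijection, heavy w-links (whose target has $\ge d$ leaf descendants in $\cT$) correspond exactly to heavy children of $\overline{v}$ in $\ocT$, since $cp$ occurs in $T$ at least $d$ times if and only if $\overline{p}c$ occurs in $\oT$ at least $d$ times.

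First I would translate the definition of a unique w-link into this dual setting. If $\wlink(v,c)$ is unique then $v$ is the source of at least $d+1$ outgoing w-links and of exactly one heavy one. Therefore the mirror node $\overline{v}\in\ocT$ has at least $d+1$ children of which at most one is heavy, so $\overline{v}$ has at least $d$ light children. Note also that each source $v$ of a unique w-link contributes exactly one unique w-link (the unique heavy one out of $v$), so it suffices to bound the number of such source nodes.

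Next I would bound the number of such mirror nodes $\overline{v}$ in $\ocT$ by a charging argument. Let $U\subseteq\ocT$ be the set of nodes having at least $d+1$ children and at most one heavy child. To each $\overline{v}\in U$ charge its light children in $\ocT$: there are at least $d$ of them. Since every node of $\ocT$ is the child of a unique parent, the sets of children charged to distinct nodes of $U$ are disjoint. Because $\ocT$ has $O(n)$ nodes in total, we get
\[
d\cdot |U| \;\le\; \sum_{\overline{v}\in U} (\text{light children of }\overline{v}) \;\le\; O(n),
\]
so $|U|=O(n/d)$. Combining with the observation that each source contributes exactly one unique w-link yields the claim.

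I expect no significant obstacle here: the correspondence with $\ocT$ has already been established in the proof of Proposition \ref{prop:heavyspec}, and the remaining step is a one-line counting argument based on the fact that each eligible mirror node has $\ge d$ light children, which are disjoint across different parents. The only point one needs to be slightly careful about is to confirm that the bijection between outgoing w-links and children of the mirror node is indeed a bijection of \emph{heavy} w-links with \emph{heavy} children (in the sense used to define ``heavy'' for w-links and for tree nodes), which follows from the symmetric relationship between occurrences of $cp$ in $T$ and of $\overline{p}c$ in $\oT$.
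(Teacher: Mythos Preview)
Your argument is correct. The charging step --- each source of a unique w-link accounts for at least $d$ pairwise-disjoint ``light'' objects out of an $O(n)$ total --- is exactly the idea the paper uses. The only difference is the setting in which the count is carried out: you pass to $\ocT$ (mirroring Proposition~\ref{prop:heavyspec}) and charge the $\ge d$ light \emph{children} of $\overline{v}$ against the $O(n)$ nodes of $\ocT$, whereas the paper stays in $\cT$ and charges the $\ge d$ light outgoing \emph{w-links} of $v$ against the $O(n)$ total Weiner links. Since outgoing w-links from $v$ are in bijection with children of $\overline{v}$ (which you verify), the two counts are literally the same; the paper's version is just the shorter of the two phrasings. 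One small point worth making explicit in your write-up is that $\overline{v}$ is indeed an internal node of $\ocT$: this holds because $v$ has at least $d+1\ge 2$ outgoing w-links, so $p$ is left-extensible by at least two symbols in $T$, hence $\overline{p}$ is right-maximal in $\oT$.
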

\begin{proof}
A Weiner link $\wlink(v,a)$ is unique only if $\wlink(v,a)$ is heavy, all other w-links outgoing from $v$ are light, and there are at least $d$ light  outgoing w-links from $v$. Hence there are at least $d$ w-links for every explicitly stored target node of a unique Weiner link.
\end{proof}

We say that $\wlink(v,a)$ is difficult if its target node $u=\wlink(v,a)$ is light and its source node $v$ is heavy.  
\begin{proposition}
\label{prop:wdescendant}
   We can compute $u=\wlink(v,a)$ of $u$  in $O(1)$ time unless $\wlink(v,a)$ is difficult.   If the $\wlink(v,a)$ is difficult, we can compute  $u=\wlink(v,a)$ in $O(\log\log \sigma)$ time. 
\end{proposition}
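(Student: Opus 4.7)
The plan is to mirror the proof of Proposition~\ref{prop:descendant}, doing a case analysis on whether $v$ and $u=\wlink(v,a)$ are heavy or light and dispatching each non-difficult subcase to a constant-time routine. When $u$ is heavy, the target is retrieved in $O(1)$ from the dictionary $D'_v$ if $v$ is w-special, or read directly from the pointer stored at $v$ when $\wlink(v,a)$ is the unique heavy outgoing link (which by definition requires $v$ to have at least $d+1$ outgoing w-links). The remaining non-difficult heavy-heavy subcase, in which $v$ is not w-special and has at most $d$ outgoing w-links, is handled by attaching to every such $v$ a fusion-tree structure $F'_v$ that lists the labels and target pointers of all its outgoing w-links; since each suffix tree node is the target of at most one w-link, the total number of w-links is $O(n)$ and these fusion trees fit in $O(n\log\sigma)$ bits while answering the lookup in $O(1)$ by~\cite{FW94}.

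When $v$ is light (so the non-difficult condition is automatically met regardless of $u$), the BWT interval $[i..j]$ of $v$ has $j-i+1<d$ and thus lies inside at most two consecutive blocks $G_k,G_{k+1}$. The structures $A_k,A_{k+1}$ test in $O(1)$ whether $a$ occurs in $B[i..j]$, and $G_{k,a},G_{k+1,a}$ deliver in $O(1)$ the leftmost position $k_1$ and the rightmost position $k_2$ of $a$ within the range. Applying Theorem~\ref{theor:partrank} to these two positions gives $\ra_a(i-1,B)=\ra_a(k_1,B)-1$ and $\ra_a(j,B)=\ra_a(k_2,B)$, from which the FM-index formula yields the BWT interval $[i',j']$ of $u$; the node $u$ itself is then the lowest common ancestor of the two corresponding leaves of $\cT$, computable in $O(1)$ on the succinct topology.

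For the difficult case ($v$ heavy, $u$ light) I would compute the two ranks $\ra_a(i-1,B)$ and $\ra_a(j,B)$ with the general rank operator on $B$ of~\cite{GolynskiMR06} at $O(\log\log\sigma)$ time each and then recover $u$ from the resulting interval by the same $O(1)$ LCA step. The main obstacle is really the heavy-heavy subcase that is covered by neither $D'_v$ nor the unique-link pointer; the proposed $F'_v$ fusion trees play the role of the $F_u$ structures used in Proposition~\ref{prop:descendant}, and the key verification is that their overall size stays within $O(n\log\sigma)$ bits, which follows from the $O(n)$ bound on the total number of w-links.
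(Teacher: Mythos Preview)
Your case analysis matches the paper's almost exactly, and your handling of the w-special case, the unique-link case, the light-$v$ case, and the difficult case are all correct. The gap is in the remaining heavy--heavy subcase (non-w-special $v$ with at most $d$ outgoing w-links), where your proposed fusion trees $F'_v$ do not meet the space budget.

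You store in $F'_v$ the labels \emph{and target pointers} of all outgoing w-links of $v$. A target is an arbitrary node of $\cT$, so its identifier needs $\Theta(\log n)$ bits, not $O(\log\sigma)$. Even granting the $O(n)$ bound on the total number of w-links, the pointers alone cost $\Theta(n\log n)$ bits, which exceeds the $O(n\log\sigma)$ working space the section (and the whole construction) must respect. Your stated justification ``each suffix tree node is the target of at most one w-link'' is also not quite right: that holds for explicit Weiner links, but a node can be the target of several implicit ones; the $O(n)$ bound is true, but it follows from the enumeration argument of~\cite{Belaz14}, not from this injectivity.

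The paper handles this subcase without storing anything per w-link, by exploiting the structure of the BWT interval of $v$. Since $v$ is not w-special, the link for $a$ is the only heavy one; the remaining at most $d$ light links each contribute fewer than $d$ occurrences, so there are fewer than $d^2$ positions in $B[l_v..r_v]$ that are not $a$. Hence the rightmost (resp.\ leftmost) occurrence of $a$ lies within distance $d^2$ of $r_v$ (resp.\ $l_v$), and the blocks $G_j$ of size $4d^2$ together with $A_j$ and $G_{j,a}$ locate it in $O(1)$; a partial-rank query then yields $\ra_a(r_v,B)$ and $\ra_a(l_v-1,B)$, from which the target interval and node follow as in your light-$v$ case. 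This is the step your proposal is missing.
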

\begin{proof}
   Suppose that $u$ is heavy. If  $v$ is w-special, we can find $u$ in $O(1)$ time using $D_u$.  If $v$ is not w-special and it has at most $d+1$ w-children, then we find $u_i$ in $O(1)$ time using data structures on $B$. Let $[l_v,r_v]$ denote the suffix range of $v$. The suffix range of $u$ is $[l_u,r_u]$ where $l_u=\Acc[a]+\ra_a(l_v-1,B)+1$ and $r_u=\Acc[a]+\ra_a(r_v,B)$. We can find $\ra_a(r_v,B)$ as follows. Since $v$ has at most $d$ light w-children, the rightmost occurrence of $a$ in $B[l_v,r_v]$  is within the distance $d^2$ from $r_v$. Hence we can find the rightmost $i_a\le r_v$ such that $B[i_a]=a$ by searching in the interval $G_j$ that contains $r_v$ or the preceding interval $G_{j-1}$.  When $i_a$ is found, $\ra_a(r_v,B)=\ra_a(i_a,B)$ can be computed in $O(1)$ time because partial rank queries on $B$ are supported in time $O(1)$. We can compute $\ra_a(l_v-1,B)$ in the same way.  When $\ra$ queries are answered, we can find $l_u$ and $r_u$ in constant time.  Then we can identify the node $u$ by computing the lowest common ancestor of $l_u$-th and $r_u$-th leaves in $\cT$. 

If $v$ is not special and it has more than $d+1$ outgoing w-links, then $u$ is the only heavy target node of a w-link starting at $v$; hence, its index $i$ is stored in the node $v$.  Suppose that $u$ is light and $v$ is also light. Then the suffix range $[l_v,r_v]$ of $v$ has length at most $d$.  $B[l_v,r_v]$ intersects at most two intervals $G_j$. Hence we can find $\ra_a(l_v-1,B)$ and $\ra_a(r_v,B)$ in constant time.  Then we can find the range $[l_u,r_u]$ of the node $u$ and identify $u$ in time $O(1)$ as described above. If $v$ is heavy and $u$ is light, then $\wlink(v,a)$ is a difficult w-link. In this case we need $O(\log\log \sigma)$ time to compute $\ra_a(l_v-1,B)$ and $\ra_a(r_v,B)$. Then we find the range $[l_u,r_u]$ and the node $u$ is found as described above. 
\end{proof}
\begin{proposition}
\label{prop:wpath}
 Any sequence of nodes $u_1$, $\ldots$, $u_t$ where $u_i=\wlink(u_{i-1},a_{i-1})$ for some symbol $a_{i-1}$ contains at most one difficult w-link. 
\end{proposition}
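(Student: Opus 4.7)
The plan is to exploit a basic monotonicity property of w-links and the definition of heaviness. Recall that if $p$ is the path label of $v$ and $u=\wlink(v,a)$, then $u$ is the locus of $ap$. Every occurrence of $ap$ in $T$ gives rise to an occurrence of $p$ (just drop the first character), so the set of occurrences of $ap$ injects into the set of occurrences of $p$. In suffix-tree terms, the number of leaf descendants of $u$ is at most the number of leaf descendants of $v$.

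From this I would deduce the following monotonicity claim along the sequence $u_1,\ldots,u_t$: if some $u_k$ is light (has fewer than $d$ leaf descendants), then $u_{k+1},u_{k+2},\ldots,u_t$ are all light as well. This is immediate from the inequality on leaf counts combined with the definition of a light node.

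Now recall that a w-link $\wlink(v,a)$ is difficult precisely when its source $v$ is heavy and its target $u$ is light. Thus a difficult w-link in the sequence is exactly a position $k$ where $u_{k-1}$ is heavy and $u_k$ is light — in other words, a ``heavy-to-light transition'' in the sequence. By the monotonicity claim, once the sequence has become light it stays light, so at most one such transition can occur. Consequently the sequence $u_1,\ldots,u_t$ contains at most one difficult w-link, which is what needed to be shown. There is no real obstacle here; the only care needed is to state cleanly that the leaf-descendant count is non-increasing under w-links, after which the conclusion is essentially a one-line observation about monotone Boolean sequences.
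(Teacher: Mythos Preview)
Your proof is correct and follows essentially the same approach as the paper's: both argue that heaviness is monotone non-increasing along a chain of w-links, so there is at most one heavy-to-light transition, i.e., at most one difficult w-link. You actually make the monotonicity explicit by observing that the leaf-descendant count of $\wlink(v,a)$ cannot exceed that of $v$ (since occurrences of $ap$ inject into occurrences of $p$), whereas the paper simply asserts that once a light node is reached all subsequent nodes are light; your version is a slightly cleaner justification of the same fact.
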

\begin{proof}
  Let $\pi$ denote the path of w-links that  contains nodes $u_1$, $\ldots$, $u_t$.  Suppose that a node $u_1$ is  a heavy node and  $u_t$ is a light node.  Let $u_l$ denote the first light node on the path $\pi$.  Then all nodes on the path from $u_l$ to $u_t$ are light nodes and $\wlink(u_{l-1},a_{l-1})$ is the only difficult w-link  on the path from $u_1$ to $u_t$. 
If $u_1$ is light or $u_t$ is heavy, then all nodes on $\pi$ are light nodes (resp. all nodes on $\pi$ are heavy nodes). In this case  there are apparently no difficult w-links  between $u_1$ and $u_t$. 
\end{proof}

\paragraph{Pre-processing.}
Now we show how we can construct above described  auxiliary data structures in linear time. We start by generating the suffix tree topology and creating data structures $F_u$ and $D_u$ for all nodes $u$.  For every node $u$ in the suffix tree we  create the list of its children $u_i$ and their labels in $O(n)$ time. For every tree node $u$ we can  find the number of its leaf descendants using standard operations on the suffix tree topology. Hence, we can determine whether $u$ is a heavy or a light node and whether $u$ is a special node. When this information is available, we generate the data structures $F_u$ and $D_u$. 

We can create data structures necessary for navigating along w-links in a similar way.  We visit all nodes $u$ of $\cT$. Let $l_u$ and $r_u$ denote the indexes of leftmost and rightmost leaves in the subtree of $u$. Let $B$ denote the BWT of $T$. Using the method of Lemma~\ref{lemma:colrep}, we can generate the list 
of distinct symbols in $B[l_u..r_u]$ and count how many times every symbol occurred in $B[l_u..r_u]$ in $O(1)$ time per symbol. If a symbol $a$ occurred more than $d$ times, then $\wlink(u,a)$ is heavy. Using this information, we can identify w-special nodes and create data structures $D'_u$. Using the method of~\cite{Ruzic08}, we can construct $D'_u$ in 
$O(n_u\log\log n_u)$ time. By Lemma~\ref{prop:heavyspec} the total number of target nodes in all $D'_u$ is $O(n/d)$; hence we can construct all $D'_u$ in $o(n)$ time.  
 We can also find all nodes $u$ with a unique w-link. All dictionaries $D'_u$ and all unique w-links need $O((n/d)\log n)=O(n)$ bits of space.  

\paragraph{Supporting a Sequence of $\extendright$ Operations.}
\begin{lemma}
\label{lemma:extendsequence}
If we know the suffix interval of  a right-maximal factor $T[i..i+j]$  in $B$ and the suffix interval of $\overline{T[i..i+j]}$ in $\oB$, the we can find the intervals of $T[i..i+j+t]$ and $\overline{T[i..i+j+t]}$   in $O(t+\log\log\sigma)$ time.  
\end{lemma}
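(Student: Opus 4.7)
The plan is to execute the standard $\extendright$ implementation described earlier in this section $t$ times in succession, charging the expensive parts to the ``difficult'' node or ``difficult'' w-link that each monotone descending walk is guaranteed to contain at most once. Concretely, each single call to $\extendright$ needs (i) a pair of rank queries $\ra_c(i'-1,\oB)$ and $\ra_c(j',\oB)$ to produce the new $\oB$-interval, (ii) the test of whether $c$ is the only symbol occurring in $\oB[i'..j']$, which is $O(1)$ by the data structure of Section~\ref{sec:colrep}, and (iii) in the non-trivial case, a child-find in $\cT$ to locate the child labeled by $c$. Everything else (accessing $\Acc[c]$, LCAs, leftmost/rightmost leaves) is $O(1)$ on the stored topologies.

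Next I would observe that the $t$ successive calls trace two parallel monotone walks. On the $\cT$ side, because $T[i..i+j]$ is right-maximal its interval corresponds to an explicit node $v_0$ of $\cT$, and every child-find produces a proper descendant of the current node; applied to this descending path, Proposition~\ref{prop:descendant} gives $O(1)$ per step and Proposition~\ref{prop:path} guarantees that at most one of these child-finds is difficult, contributing an extra $O(\log\log\sigma)$. On the $\ocT$ side, I would identify each intermediate $\oB$-interval with the locus node of $\ocT$ given by the LCA of its endpoint leaves; the sequence of loci is then precisely the chain of nodes produced by consecutive Weiner links. Proposition~\ref{prop:wdescendant} gives $O(1)$ per step using the local structures $A_j$, $G_{j,a}$ together with the $O(1)$ partial-rank data structure of Theorem~\ref{theor:partrank}, while Proposition~\ref{prop:wpath} promises at most one difficult w-link on that chain, again contributing $O(\log\log\sigma)$. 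Summing the two walks yields the claimed $O(t+\log\log\sigma)$ bound.

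The main obstacle I anticipate is the bookkeeping around intermediate factors that are \emph{not} right-maximal. When an extension $T[i..i+j+k]$ slides along an edge of $\cT$ without crossing an explicit node, no child-find is performed and the $B$-interval is simply inherited, which is harmless on the $\cT$ side; but the corresponding $\oB$-interval must still be refreshed, and the locus of $\oB[i'..j']$ in $\ocT$ may jump by an implicit w-link whose target is a proper ancestor of the true locus. I plan to argue that in both situations the heaviness classification used in Propositions~\ref{prop:path} and~\ref{prop:wpath} depends only on the number of leaves below the relevant interval, so the ``at most one difficult step'' guarantee transfers directly to the sequence of loci actually visited and the cost accounting above goes through.
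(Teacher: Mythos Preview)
Your proposal is correct and follows essentially the same route as the paper's own proof: compute the new $\oB$-interval at each step as a Weiner link in $\ocT$ (answering the two rank queries on $\oB$ via the $A_j$, $G_{j,a}$, and partial-rank machinery of Proposition~\ref{prop:wdescendant}), compute the new $B$-interval by child navigation in $\cT$ via Proposition~\ref{prop:descendant}, and bound the total cost of each monotone walk using Propositions~\ref{prop:path} and~\ref{prop:wpath}. The paper makes explicit that it builds the navigation and Weiner-link structures for \emph{both} $\cT$ and $\ocT$, and it separates out the ``only one symbol in $\oB[i'..j']$'' case (your non-right-maximal edge slide) to handle it directly with two partial-rank queries, but your observation that these steps are still w-links in $\ocT$ whose targets only get lighter already gives the same single-difficult-step guarantee.
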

\begin{proof}
  Let $\cT$ and $\ocT$ denote the suffix tree for the text $T$ and let $\ocT$ denote the suffix tree of the reverse text $\oT$. 
  We keep the data structure for navigating the suffix tree $\cT$, described in Proposition~\ref{prop:descendant} and the data structure for computing Weiner links described in Proposition~\ref{prop:wdescendant}. We also keep the same data structures for $\ocT$.  
  Let $[\ell_{0,s},\ell_{0,e}]$ denote the suffix interval of $T[i..i+j]$; let $[\ell'_{0,s},\ell'_{0,e}]$ denote the suffix interval of $\overline{T[i..i+j]}$.  We navigate down the tree following the symbols $T[i+j+1]$, $\ldots$, $T[i+j+t]$.  Let $a=T[i+j+k]$
for some $k$ such that $1\le k\le t$ and suppose that the suffix interval $[\ell_{k-1,s},\ell_{k-1,e}]$ of $T[i..i+j+k-1]$ and the suffix interval $[\ell'_{k-1,s},\ell'_{k-1,e}]$ of $\overline{T[i..i+j+k-1]}$ are already known. First, we check whether our current location is a node of $\cT$. If $\oB[\ell'_{k-1,s},\ell'_{k-1,e}]$ contains only one symbol $T[i+j+k]$, then the range of $T[i..i+j+k]$ is identical with the range of $T[i..i+j+k-1]$. We can calculate the range of $\overline{T[i..i+j+k]}$ in a standard way by answering two rank queries on $\oB$ and $O(1)$ arithmetic operations; see Section~\ref{sec:prelim}. Since $\oB[\ell'_{k-1,s},\ell'_{k-1,e}]$ contains only one symbol, $\ra$ queries that we need to answer are partial rank queries. Hence we can find the range of $\overline{T[i..i+j+k]}$ in time $O(1)$.  If $\oB[\ell'_{k-1,s},\ell'_{k-1,e}]$ contains more than one symbol, then  there is a node $u\in \cT$ that is labeled with $T[i..i+j+k-1]$; $u=lca(\ell_{k-1,s},\ell_{k-1,e})$ where $lca(f,g)$ denotes the lowest common ancestor of the $f$-th and the $g$-th leaves. We find the child $u'$ of the  node $u$ in $\cT$ that is labeled with $a=T[i+j+k]$. We also compute the Weiner link $\overline{u'}=\wlink(\overline{u},a)$ for a node $\overline{u'}=lca(\ell'_{k-1,s},\ell'_{k-1,e})$ in $\ocT$. Then $\ell'_{k,s}=\mathtt{leftmost\_leaf}(\overline{u'})$ and $\ell'_{k,e}=\mathtt{rightmost\_leaf}(\overline{u'})$. 
We need to visit at most $t$ nodes of $\cT$ and at most $t$ nodes of $\ocT$ in order to find the desired interval. 
By Proposition~\ref{prop:descendant} and Proposition~\ref{prop:path}, the total time needed to move down in $\cT$ is $O(t + \log\log \sigma)$. By Proposition~\ref{prop:wdescendant} and Proposition~\ref{prop:wpath}, the total time to compute all necessary w-links in $\ocT$ is also $O(t +\log\log \sigma)$. 
\end{proof}

\paragraph{Finding the Intervals.}
The algorithm for computing PLCP, described in Section~\ref{sec:permlcp}, assumes that we know the intervals  of $T[j\Delta'..j\Delta'+\ell_i]$ and 
$\overline{T[j\Delta'..j\Delta'+\ell_{i}]}$  for $i=j\Delta'$ and $j=0,1,\ldots, n/\Delta'$.  These values can be found as follows. We start by computing the intervals of $T[0..\ell_0]$ and $\overline{T[0..\ell_0]}$.  Suppose that the intervals of $T[j\Delta'..j\Delta'+\ell_i]$ and 
$\overline{T[j\Delta'..j\Delta'+\ell_{i}]}$ are known. We can compute $\ell_{(j+1)\Delta'}$ as shown in Section~\ref{sec:permlcp}.  We find the intervals of $T[(j+1)\Delta'..j\Delta'+\ell_{i}]$ and $\overline{T[(j+1)\Delta'..j\Delta'+\ell_{i}]}$ in time $O(\Delta')$ by executing $\Delta'$ operations $\contractleft$.   Each operation $\contractleft$ takes constant time. Then we calculate the intervals of $T[(j+1)\Delta'..(j+1)\Delta'+\ell_{i+1}]$ and $\overline{T[(j+1)\Delta'..(j+1)\Delta'+\ell_{i+1}]}$  in $O(\log\log\sigma+(\ell_{i+1}-\ell_i+\Delta'))$ time using Lemma~\ref{lemma:extendsequence}. We know from Section~\ref{sec:permlcp} that $\sum (\ell_{i+1}-\ell_i)=O(n)$. Hence we compute all necessary intervals in time $O(n+(n/\Delta')\log\log \sigma)=O(n)$.

\section{Compressed Index}
\label{sec:index}
In this section we show how our algorithms can be used to construct a compact index in deterministic linear time. We prove the following result. 
\begin{theorem}
  \label{theor:index}
We can construct an index for a text $T[0..n-1]$ over an alphabet of size $\sigma$ in $O(n)$ deterministic time using  O$(n\log\sigma)$ bits of working space. This index occupies $nH_k+ o(n\log\sigma)+O(n\frac{\log n}{d})$ bits of space for a parameter $d>0$. All occurrences of a query substring $P$ can be counted in $O(|P|+\log\log \sigma)$ time; all $\occ$ occurrences of $P$ can be reported in $O(|P|+\log\log\sigma + \occ\cdot d)$ time. An arbitrary substring $P$ of $T$  can be extracted in $O(|P|+d)$ time.
\end{theorem}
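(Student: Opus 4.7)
The plan is to assemble a compact full-text index by plugging the BWT from Theorem~\ref{theor:bwt}, the partial rank structure from Theorem~\ref{theor:partrank}, and a sparse suffix array sample into the standard FM-index template. The four components are: (i) the BWT $B$ of $T$ in a higher-order entropy-compressed representation occupying $nH_k + o(n\log\sigma)$ bits that supports access and select in $O(1)$, general rank in $O(\log\log\sigma)$ (via Golynski--Munro--Rao), and partial rank in $O(1)$ via Theorem~\ref{theor:partrank}; (ii) the cumulative frequencies $\Acc$, in $O(\sigma\log n)=o(n\log\sigma)$ bits; (iii) a sampled suffix array $SAM_d$ together with a bitvector marking the sampled BWT positions, in $O((n\log n)/d)$ bits; and (iv) a symmetric inverse sample for extraction.

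I would construct these in four steps. First, run Theorem~\ref{theor:bwt} to produce $B$ in $O(n)$ time using $O(n\log\sigma)$ bits. Second, install the partial rank structure of Theorem~\ref{theor:partrank}, which yields $LF(i)=\Acc[B[i]]+\ra_{B[i]}(i,B)$ in $O(1)$ time. Third, scan the text right-to-left by iterating $LF$ starting from the BWT position of $T[n-1..]$ (which is known after Theorem~\ref{theor:bwt}): after $k$ iterations we sit at the BWT position of $T[n-1-k..]$, so we write the appropriate $SA$ value into $SAM_d$ whenever $(n-1-k) \bmod d = 0$, and populate the inverse sample at the same time. This sweep runs in $O(n)$ deterministic time. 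Fourth, re-encode $B$ into its $nH_k+o(n\log\sigma)$-bit representation; this is a standard linear-time transformation starting from the plain BWT, which fits in the working budget.

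Queries follow the FM-index template. Counting uses backward search; by combining the $O(\log\log\sigma)$ general rank on $B$ with the $O(1)$ partial rank of Theorem~\ref{theor:partrank}, the two rank queries per step needed to update the interval can be amortised against a single $O(\log\log\sigma)$ startup cost, yielding $O(|P|+\log\log\sigma)$ total. For reporting, each of the $\occ$ occurrences is produced by iterating $LF$ at most $d$ times until a marked sample is reached and adding the stored $SA$ value, for $O(d)$ time per occurrence. Extraction of $T[i..i+|P|-1]$ jumps from the inverse sample to the BWT position of $T[i'..]$ for the largest sampled $i'\le i$, and then follows $\Psi=LF^{-1}$, realised by one $\sel$ query on $B$ per step in $O(1)$, for at most $d-1+|P|$ iterations.

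The main obstacle is obtaining the $O(|P|+\log\log\sigma)$ count time within the entropy-compressed bit budget: the per-step cost of backward search must drop from $O(\log\log\sigma)$ to amortised $O(1)$, which requires a careful interplay between the general rank representation of $B$ and the $O(1)$ partial rank of Theorem~\ref{theor:partrank}, plus a single up-front general rank to anchor the walk. Once this is in place, the correctness of locate and extract is standard, and the overall $O(n)$ deterministic construction in $O(n\log\sigma)$ bits follows immediately from Theorems~\ref{theor:bwt} and~\ref{theor:partrank}.
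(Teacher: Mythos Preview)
Your construction of the BWT, the partial-rank structure, the $LF$-based sweep for the sampled suffix array, and the locate/extract routines are all fine and match what the paper needs. The problem is the counting time, which you yourself flag as ``the main obstacle'' but then do not actually resolve.

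In standard backward search the step from the interval of $P[i+1..m-1]$ to that of $P[i..m-1]$ requires $\ra_{P[i]}(l-1,B)$ and $\ra_{P[i]}(r,B)$, and in general $B[l-1]\neq P[i]$ and $B[r]\neq P[i]$, so Theorem~\ref{theor:partrank} does not apply. Your sentence about ``amortising the two rank queries per step against a single $O(\log\log\sigma)$ startup cost'' via ``a careful interplay'' is not a mechanism; there is no known way to turn general backward-search rank queries into partial ranks plus one slow query. The paper's proof achieves the $O(|P|+\log\log\sigma)$ bound by an entirely different route: it works with the BWT $\oB$ of the \emph{reversed} text and simultaneously walks down the suffix tree $\cT$ of $T$, classifying nodes as heavy, light, and special with threshold $d=\log\sigma$. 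For special nodes it stores a dictionary $D_u$ with precomputed answers to the relevant rank queries; for non-special heavy nodes and for light nodes it shows that the needed ranks reduce to \emph{small interval rank queries} (Lemma~\ref{lemma:interrank}), answerable in $O(1)$ with $O(n\log\log\sigma)$ extra bits; and the transition from a heavy to a light node, which forces a genuine $O(\log\log\sigma)$ rank, can happen at most once along any root-to-locus path. That single-slow-step argument is the crux, and it requires the suffix-tree topology, the node-marking scheme, the small-interval-rank lemma, and the stored dictionaries $D_u$ --- none of which appear in your proposal.
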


An uncompressed index by Fischer and Gawrychowski~\cite{FG13} also supports counting queries in $O(|P|+\log\log \sigma)$ time; however their data structure uses $\Theta(n\log n)$ bits. We refer to~\cite{BelazzouguiN14} for the latest results on compressed indexes. 
\paragraph{Interval Rank Queries.}
We start by showing how a compressed data structure that supports select queries can be extended to support a new kind of queries that we dub \emph{small interval rank queries}. An interval rank query $\ra_a(i,j,B)$ asks for $\ra_a(i',B)$ and $\ra_a(j',B)$, where $i'$ and $j'$ are the leftmost and rightmost occurrences of the  symbol $a$ in $B[i..j]$; if $a$ does not occur in $B[i..j]$, we return {\em null}. An interval query $\ra_a(i,j,B)$ is a small interval query if $j-i\le 2\log^2\sigma$.
Our compressed index relies on the following result.
\begin{lemma}
  \label{lemma:interrank}
Suppose that we are given a data structure that supports $\acc$ queries on a sequence $C[0..m]$ in time $t_{\sel}$. Then, using $O(m\log \log \sigma)$ additional bits, we can  support small  interval rank queries on $C$ in $O(t_{\sel})$ time.
\end{lemma}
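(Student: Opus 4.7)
The plan is to reduce an interval rank query to partial rank queries by first locating the endpoints of $a$ inside the window $[i..j]$ and then invoking Theorem~\ref{theor:partrank}. Specifically, if we can find the leftmost position $i^\ast$ and the rightmost position $j^\ast$ of symbol $a$ inside $C[i..j]$, or conclude that $a$ does not occur there, then $\ra_a(i^\ast,C)$ and $\ra_a(j^\ast,C)$ are partial rank queries, since $C[i^\ast]=C[j^\ast]=a$, and each is answered in $O(1)$ time by the $O(m\log\log\sigma)$-bit structure of Theorem~\ref{theor:partrank}. Hence the real work is locating $i^\ast$ and $j^\ast$ in a window of length at most $2\log^2\sigma$.

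I would partition $C$ into blocks of $s=\log^2\sigma$ positions, so that any small query interval overlaps at most three consecutive blocks. The query is then answered by performing, on each overlapping block, the local primitive ``given $a$ and a sub-range $[l..r]$ of the block, return the leftmost and rightmost occurrence of $a$ in the block restricted to $[l..r]$, or report that none exists.'' Aggregating the (at most three) per-block answers yields $i^\ast$ and $j^\ast$ in $O(1)$ extra time, after which the two partial rank calls finish the query.

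The per-block index is the technical core. For each block $B_k$ I would re-map its distinct symbols to local identifiers in $[0..s)$ via a small deterministic dictionary in the style of Section~\ref{sec:partrank}, and store, for each local identifier, the sorted list of its positions inside the block. Since positions inside a block fit in $O(\log\log\sigma)$ bits and each list contains at most $s$ of them, constant-time predecessor and successor on such short lists are attainable either by fusion-tree micro-indices or by globally precomputed tables of size $o(m)$ indexed by short bit patterns. A single access $\acc(i^\ast)$ on any candidate returned by the local dictionary certifies that the candidate really carries the queried symbol $a$ (so that, if convenient, an imperfect hashing scheme may be used inside the per-block dictionary); this is where the $O(t_{\sel})$ factor enters the running time.

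The main obstacle is staying within $O(m\log\log\sigma)$ extra bits while delivering $O(1)$ range successor/predecessor per block. Each block contributes $O(s\log\log\sigma)=O(\log^2\sigma\,\log\log\sigma)$ bits for its dictionary and packed position lists, and there are $m/s$ blocks, so the per-block indices total $O(m\log\log\sigma)$ bits; combined with the partial rank structure of Theorem~\ref{theor:partrank} the claimed space is met. The overall query time is $O(1)$ local work plus a constant number of access and partial rank queries, i.e.\ $O(t_{\sel})$, matching the statement of the lemma.
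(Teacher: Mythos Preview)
Your high-level plan matches the paper's proof: partition $C$ into blocks of size $\Theta(\log^2\sigma)$, locate the leftmost and rightmost occurrences of $a$ inside the constantly many blocks the window touches, and finish with two partial rank queries via Theorem~\ref{theor:partrank}. The reduction to partial rank and the block decomposition are exactly what the paper does.

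There is, however, a real gap in your space analysis for the per-block symbol dictionary. You propose to map the distinct symbols of a block to local identifiers with a deterministic dictionary ``in the style of Section~\ref{sec:partrank}.'' The Hagerup--Miltersen--Pagh dictionaries used there store their keys explicitly and therefore need $\Theta(\log\sigma)$ bits per key. In Section~\ref{sec:partrank} this is affordable because the total number of stored keys is only $O(m/\log^2 m)$; in your construction every one of the $m$ positions of $C$ contributes a key to some block dictionary, so the total becomes $\Theta(m\log\sigma)$, overshooting the $O(m\log\log\sigma)$ budget. Your aside about ``imperfect hashing'' with $\acc$-verification does not close the gap either: if several distinct symbols collide onto one local identifier, the associated position list mixes their occurrences, and you can no longer isolate the leftmost or rightmost occurrence of the particular symbol $a$ in $O(1)$ time by a single predecessor/successor probe.

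The paper avoids storing $\log\sigma$-bit keys altogether by implementing the per-block symbol set $A'_i$ as a succinct SB-tree~\cite{GrossiORR09}. This structure spends only $O(\log\log m)$ bits per element because it records, for each distinct symbol, merely the \emph{offset of its leftmost occurrence inside the block} (an $O(\log\log\sigma)$-bit quantity) rather than the symbol itself; whenever the SB-tree search needs to compare against an actual key, it issues a single $\acc$ query on $C$ at that stored offset. This is precisely where the $O(t_{\sel})$ term in the running time arises. The per-symbol position lists $I_{a,i}$ are handled by the same SB-tree machinery (your fusion-tree/table alternative for those lists is equally valid, since block-relative positions already fit in $O(\log\log\sigma)$ bits). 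Finally, the paper trims the $O(m\log\log m)$ bits of the SB-trees down to $O(m\log\log\sigma)$ via the chunking trick of Theorem~\ref{theor:partrank}. The missing ingredient in your argument is a symbol dictionary that refrains from materializing the keys and instead leans on $\acc$; once you plug that in, your outline becomes the paper's proof.
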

\begin{proof}
 We split $C$ into groups $G_i$ so that every group contains $\log^2\sigma$ consecutive symbols of $S$, $G_i=C[i\log^2\sigma .. (i+1)\log^2\sigma-1]$. Let $A_i$ denote the set of symbols that occur in $G_i$. We would need  $\log \sigma$ bits per symbol to store $A_i$. Therefore we keep only a dictionary $A'_i$ implemented as a succinct  SB-tree~\cite{GrossiORR09}.  
A succinct SB-tree needs $O(\log\log m)$ bits per symbol; using SB-tree, we can determine whether a query symbol $a$ is in $A_i$ in constant time if we can access elements of $A_i$. We can identify every $a\in A_i$ by its leftmost position in $G_i$. Since $G_i$ consists of $\log^2\sigma$ consecutive symbols, a position within $G_i$ can be specified using $O(\log\log \sigma)$ bits. Hence we can  access any symbol of $A_i$ in $O(1)$ time. For each $a\in A_i$ we also keep a data structure $I_{a,i}$ that stores all positions where  $a$ occurs in $G_i$. Positions are stored as differences with the left border of $G_i$: if $C[j]=a$, we store the  difference $j-i\log^2\sigma$. Hence elements of $I_{a,i}$ can be stored in $O(\log\log\sigma)$ bits per symbol. $I_{a,i}$ is also implemented as an SB-tree. 

Using data structures $A'_i$ and $I_{a,i}$, we can answer small interval rank queries.  
Consider a group  $G_t=C[t\log^2\sigma..(t+1)\log^2\sigma-1]$, an index $i$ such that  $t\log^2\sigma \le i \le (t+1)\log^2\sigma$, and a symbol $a$. We can find the largest $j\le i$ 
such that $C[j]=a$ and $C[j]\in G_t$: first we look for the symbol $a$ in $A'_t$; if $a\in A'_t$, we find the predecessor of $j$ in $I_{a,t}$.  An interval $C[i..j]$ of size $d\le \log^2 \sigma$ intersects at most two groups $G_t$ and $G_{t-1}$.  We can find the rightmost occurrence of a symbol $a$ in $[i,j]$ as follows. First we look for the rightmost occurrence $j'\le j$ of $a$ in $G_t$; if $a$ does not occur in $C[t\log^2\sigma.. j]$, we look for the rightmost occurrence $j'\le t\log^2\sigma-1$ of $a$ in $G_{t-1}$. We can find the leftmost occurrence $i'$ of $a$ in $C[i..j]$ using a symmetric procedure. When $i'$ and $j'$ are found, we can compute $\ra_a(i',C)$ and $\ra_a(j',C)$ in $O(1)$ time by answering partial rank queries. Using the result of Theorem~\ref{theor:partrank} we can support partial rank queries in $O(1)$ time and $O(m\log\log\sigma)$ bits. 

Our data structure takes $O(m\log\log m)$ additional bits: Dictionaries $A'_i$ need $O(\log\log m)$ bits per symbol. Data structures $I_{a,t}$ and the structure for partial rank queries need $O(m\log\log \sigma)$ bits.  We can reduce the space usage from $O(m\log\log m)$ to $O(m\log\log \sigma)$ using the same method as in Theorem~\ref{theor:partrank}. 
\end{proof}

\paragraph{Compressed Index.}
We mark nodes of the suffix tree $\cT$ using the method of Section~\ref{sec:intervals}, but we set $d=\log \sigma$. 
Nodes of $\cT$ are classified into heavy, light, and special as defined in Section~\ref{sec:intervals}.  For every special node $u$, we construct a dictionary data structure $D_u$ that contains the labels of all heavy children of $u$. 
If there is child $u_j$ of $u$, such that the first symbol on the edge from to $u$ to $u_j$ is $a_j$, then we keep $a$ in $D_u$. For every $a_j\in D_u$ we store  the index $j$ of the child $u_j$.  If a heavy node $u$ has only one heavy child $u_j$ and more than $d$ light children, then we also store data structure $D_u$ for such a node $u$. If a heavy node has less $d$ children and one heavy child, then we keep the index of the heavy child using $O(\log d)=O(\log\log \sigma)$ bits. 

The second component of our index is the Burrows-Wheeler Transform $\oB$ of the reverse text $\oT$. We keep the data structure that supports partial rank, select, and access queries on $\oB$. Using e.g., the result from~\cite{BCGNNalgor13}, we can support $\acc$ queries in $O(1)$ time while $\ra$ and $\sel$ queries are answered in $O(\log\log \sigma)$ time. 
Moreover we construct a data structure, described in Lemma~\ref{lemma:interrank}, that supports rank queries on a small interval in $O(1)$ time. We also keep the data structure of Lemma~\ref{lemma:colrep} on $\oB$; using this data structure, we can find in $O(1)$ time whether an arbitrary interval  $\oB[l..r]$ contains exactly  one symbol. 
Finally we explicitly store answers to selected rank queries.  Let $\oB[l_u..r_u]$ denote the range of $\oP_u$, where $P_u$ is the string that corresponds to a node $u$ and $\oP_u$ is the reverse of $P_u$. For all data structures $D_u$ and for every symbol $a\in D_u$  we store the values of $\ra_a(l_u-1,\oB)$ and $\ra_a(r_u,\oB)$. 

We will show later in this section that each $\ra$ value can be stored in $O(\log\sigma)$ bits.  Thus $D_u$ needs $O(\log\sigma)$ bits per element. The total number of elements in all $D_u$ is  equal to the  number of special nodes plus the number of  heavy nodes with  one heavy child and at least $d$ light children. Hence all $D_u$ contain $O(n/d)$ symbols and use 
$O((n/d)\log\sigma)=O(n)$ bits of space. Indexes of heavy children for nodes with only one heavy child and at most $d$ light children can be kept in $O(\log\log\sigma)$ bits. 
 Data structure that supports $\sel$, $\ra$, and $\acc$  queries on $\oB$ uses $nH_k(T)+o(n\log\sigma)$ bits. Auxiliary data structures on $\oB$ need $O(n)+O(n\log\log\sigma)$ bits. Finally we need $O(n\frac{\log n}{d})$ bits to retrieve the position of a suffix in $\oT$ in $O(d)$ time. Hence the space usage of our data structure is $nH_k(T)+o(n\log\sigma)+O(n)+O(n\frac{\log n}{d})$.


\paragraph{Queries.}
Given a query  string $P$, we will find in time $O(|P|+\log\log \sigma)$  the range of the reversed string  $\oP$ in $\oB$. We will show below how to find the range of $\overline{P[0..i]}$ if the range of $\overline{P[0..i-1]}$ is known. Let $[l_j..r_j]$ denote the range of $\overline{P[0..j]}$, i.e., $\overline{P[0..j]}$ is the longest common prefix of all suffixes in $\oB[l_j..r_j]$.  We can compute $l_j$ and $r_j$ from $l_{j-1}$ and $r_{j-1}$ as $l_j=\Acc[a]+\ra_a(l_{j-1}-1,\oB)+1$ and $r_j=\Acc[a]+\ra_a(r_{j-1},\oB)$ for $a=P[j]$ and $j=0,\ldots, |P|$. Here  $\Acc[f]$ is the accumulated frequency of the 
first $f-1$ symbols. Using our auxiliary data structures on $\oB$ and additional information stored in nodes of the suffix tree $\cT$, we can answer necessary $\ra$ queries in constant time (with one exception).  At the same time we traverse a path in the suffix tree $\cT$ until the locus of $P$ is found or a light node is reached. Additional information stored in selected tree nodes will help us answer $\ra$ queries in constant time. A more detailed description is given below.

Our procedure starts at the root node of $\cT$ and we set $l_{-1}=0$, $r_{-1}=n-1$, and $i=0$. 
We compute the ranges $\oB[l_i..r_i]$ that correspond to $\overline{P[0..i]}$ for $i=0,\ldots, |P|$. Simultaneously we move down in  the suffix tree until we reach a light node. Let $u$ denote the last visited node of $\cT$ and let $a=P[i]$.  We denote by $u_a$ the next node that we must visit in the suffix tree, i.e., $u_a$ is the locus of $P[0..i]$. 
We can compute $l_i$ and $r_i$ in $O(1)$ time if $\ra_a(r_{i-1},\oB)$ and $\ra_a(l_{i-1}-1,\oB)$ are known.  We will show below that these queries can be answered in constant time because either (a)  the answers to $\ra$ queries are explicitly stored in $D_u$ or (b) the $\ra$ query that must be answered is a small interval $\ra$ query. The only exception is the situation when we move from a heavy node to a light node in the suffix tree; in this situation  the $\ra$ query takes $O(\log\log\sigma)$ time.  
For ease of description we distinguish between the following four cases. \\
({\bf i}) Node $u$ is a heavy node and $a\in D_u$. In this case we identify the heavy child $u_j$ of $u$ that is labeled with $a$. We can also find $l_i$ and $r_i$ in time $O(1)$ because $\ra_a(l_{i-1},\oB)$ and $\ra_a(r_{i-1},\oB)$ are stored  in $D_u$.\\
({\bf ii}) Node $u$ is a heavy node  and $a\not\in D_u$ or we do not keep the dictionary $D_u$ for the node $u$. In this case $u$ has at most one heavy child and at most $d$ light children.
If $u_a$ is a heavy node (case {\bf iia}), then the leftmost occurrence of $a$ in $\oB[l_{i-1}..r_{i-1}]$ is within $d^2$ symbols of $l_{i-1}$  and the rightmost occurrence of $a$ in $\oB[l_{i-1}..r_{i-1}]$ is within $d^2$ symbols of $r_{i-1}$. Hence we can find   $l_i$ and $r_i$  by answering small interval rank queries  $\ra_a(l_{i-1},l_{i-1}+d^2)$  and $\ra_a(r_{i-1}-d^2,r_{i-1})$  respectively. \\
If $u_a$ is a light node (case {\bf iib}), we answer two standard rank queries on $\oB$ in order to compute $l_i$ and $r_i$. \\  
({\bf iii}) If $u$ is a light node, then  $P[0..i-1]$ occurs at most $d$ times. Hence $\overline{P[0..i-1]}$ also occurs at most $d$ times and  $r_{i-1}-l_{i-1}\le d$. Therefore  we can compute $r_i$ and $l_i$ in $O(1)$ time by answering small interval rank queries. \\
({\bf iv}) We are on an edge of the suffix tree between a node $u$ and some child $u_j$ of $u$. 
In this case all occurrences of $P[0..i-1]$ are followed by the same symbol $a=P[i]$. Hence 
all occurrences of $\overline{P[0..i-1]}$ are preceded by $P[i]$ in the reverse text. Therefore $\oB[l_{i-1}..r_{i-1}]$ contains only one symbol $a=P[i]$. In this case $\ra_a(r_{i-1},\oB)$ and $\ra_a(l_{i-1}-1,\oB)$ are partial rank queries; hence $l_i$ and $r_i$  can be computed in $O(1)$ time. 

In all cases, except for the case (iia), we can answer $\ra$ queries and compute  $l_i$ and $r_i$ in $O(1)$ time. In case (iia) we need $O(\log\log \sigma)$ time answer $\ra$ queries. However case (iia) only takes place when the node $u$ is heavy and  its child $u_a$ is light. Since all descendants of a light node are light, case (iia) occurs only once when the pattern $P$ is processed.  Hence the total time to find the range of $\overline{P}$ in $\oB$ is $O(|P|+\log\log \sigma)$ time. When the range is known, we can count and report all occurrences of $\oP$ in standard way.

\paragraph{Construction Algorithm.}
We can construct the suffix tree $\cT$ and the BWT $\oB$ in $O(n)$ deterministic time. Then we can visit all nodes of $\cT$ and identify all nodes $u$ for which the data structure $D_u$ must be constructed. We keep information about nodes for which $D_u$ will be constructed in a bit vector. For every such node we also store the list of its heavy children with  their labels.  To compute additional information for $D_u$, we traverse the nodes of $\cT$ one more time using a variant of depth-first search. When a node $u\in \cT$ is reached, we know the interval $[l_u,r_u]$ of $\overline{s_u}$ in $\oB$, where $s_u$ is the string that labels the path from the root to a node $u\in \cT$.  We generate the list of all children $u_i$ of $u$ and their respective labels $a_i$.  If we store a data structure $D_u$ for the node $u$, we identify labels $a_h$ of heavy children $u_h$ of $u$. For every $a_h$ we compute $\ra_{a_h}(l_u-1,\oB)$
and $\ra_{a_h}(r_u,\oB)$ and add this information to $D_u$.  Then we generate the intervals that correspond to all strings $\overline{s_ua_i}$ in $\oB$ and keep them in a list $List(u)$. Since intervals in $List(u)$ are disjoint, we can store $List(u)$  in $O(\sigma \log n)$ bits.  

We can organize our  traversal in such way that only $O(\log n)$ lists $List(u)$ need to be stored. Let $num(u)$ denote the number of leaves in the subtree of a node $u$. We say that a node is \emph{small} if $num(u_i)\le num(u)/2$ and big otherwise. Every node can have at most one big child. When a node $u$  processed and $List(u)$ is generated, we visit small children $u_i$ of $u$ in arbitrary order. When all small children $u_i$ are visited and processed, we discard the list $L(u)$. Finally if $u$ has a big child $u_b$, we visit $u_b$. If a node $u$ is not the root node and we keep $List(u)$, then $num(u)\le num(parent(u))/2$. Therefore we keep $List(u)$ for at most $O(\log n)$ nodes $u$. Thus the  space we need to store  all $List(u)$ is $O(\sigma\log^2 n)=o(n)$ for $\sigma\le n^{1/2}$.  Hence the total workspace used of our algorithm is $O(n\log\sigma)$.
The total number of $\ra$ queries that we need to answer is $O(n/d)$ because all $D_u$ contain $O(n/d)$ elements. We need $O((n/d)\log\log\sigma)$ time to construct all $D_u$ and to answer all $\ra$ queries.  The total time needed to traverse $\cT$ and collect necessary data about heavy nodes and special nodes is $O(n)$. Therefore our index can be constructed in $O(n)$ time. 

It remains to show how we can store selected precomputed answers to $\ra$ queries in $O(\log\sigma)$ bits per query. We divide the sequence $\oB$ into chunks of size $\sigma^2$. For each chunk and for every symbol $a$ we encode the number of $a$'s occurrences per chunk in a binary sequence $A_a$, $A_a=1^{d_1}01^{d_2}0\ldots1^{d_i}0\ldots$ where $d_i$ is equal to the  number of times $a$ occurs in the $i$-th chunk. If a symbol $\oB[i]$ is in the chunk $Ch$, then we can answer $\ra_a(i,\oB)$ by $O(1)$  queries on $A_a$ and a rank query on $Ch$; see e.g.,~\cite{GolynskiMR06}.   Suppose that we need to store a pre-computed answer to a query $\ra_a(i,\oB)$; we store the answer to $\ra_a(i',Ch)$ where $Ch$ is the chunk that contains $i$ and $i'$ is the relative position of $\oB[i]$ in $Ch$.  Since a chunk contain $\sigma^2$ symbols, $\ra_a(i',Ch)\le \sigma^2$ and we can store the answer to $\ra_a(i',Ch)$ in $O(\log \sigma)$ bits. When the answer to the rank query on $Ch$ is known, we can compute the answer to $\ra_a(i,\oB)$ in $O(1)$ time.

\no{
The set of all special nodes and their heavy children induces a subtree $\cT'$ of $\cT$; every leaf of $\cT$ is heavy and every internal node has at least two children. Hence the total number of special nodes and their heavy children is $O(n/d)$. 

 Every $d$-th leaf of $\cT$ is marked.  If a node $u$ has at least two children that have marked descendants, then $u$ is also marked.  This marking scheme has several properties that will be used by our algorithm.  We will say that a node $v$ is \emph{special} if the parent of $v$ is marked, $v$ itself  is not marked, and $v$ has at least one marked descendant.   We say that a node $v$ is \emph{difficult} if the parent of $v$ is marked, but $v$ is not marked and not special. Thus all children of a marked node are either marked, or special, or difficult.  We will show later in this section that we can quickly navigate from a node $u\in \cT$ to its child $u_i$  unless the node $u_i$ is difficult.  
\begin{proposition}
\label{prop:descendant}
  If $u$ is a difficult node, then there are at most $d$ nodes in the subtree rooted at $u$. 
\end{proposition}

\begin{proposition}
\label{prop:unmarked}
  If a node $u$ is not marked and its parent is not marked, then $u$ has at most $2d$ siblings.
\end{proposition}
\begin{proof}
  If $u$ has more than $2d$ siblings, then at least two of these siblings are marked or at least two of them have marked descendants. Then the parent of $u$ is marked too.
\end{proof}

\begin{proposition}
  \label{prop:path}
Every path from a node $u$ to its descendant $u'$ contains at most one difficult node.
\end{proposition}
\begin{proof}
  Let $v$ be the first difficult node encountered on a path from $u$ to $u'$. Since $v$ is difficult, it has no marked descendants. Hence there are no difficult nodes below $v$.   
\end{proof}

Now we estimate the number of marked and special nodes. The number of marked leaves is $O(n/d)$. All marked nodes induce a subtree $\cT'$ of $\cT$; every internal node of $\cT'$ has at least two children and every leaf of $\cT'$ is a marked leaf of $\cT$.  Hence there are $O(n/d)$ marked nodes.  
\begin{proposition}
\label{prop:special}
  Every special node has exactly one direct marked descendant $u'$.
\end{proposition}
\begin{proof}
  That is, for every special node $u$ there is exactly one marked node $u'$, such that $u'$ is a descendant of $u$ and there are no marked nodes on the path from $u$ to $u'$. Suppose that $u$ has two direct marked descendants $u'$ and $u''$. Then the lowest common ancestor $u_1$ of $u'$ and $u''$ is also marked.  The  node $u_1$ is either identical with $u$ or is a proper descendant of $u$. Since $u$ is not marked, $u_1$ is a proper descendant of $u$ that is an ancestor of both $u'$ and $u''$. Hence neither $u'$ nor $u''$ are direct descendants of $u$.  
\end{proof}
It follows from Proposition~\ref{prop:special} that the total number of special nodes does not exceed the total number of marked nodes.

We can identify all marked nodes and all special nodes in linear time. Let $M[0..2n-2]$ and $Sp[0..2n-2]$ denote bit sequences that keep positions of marked and special nodes, $M[i]=1$ (or $Sp[i]=1$) if and only if the $i$-th node is a marked node (resp.\ the $i$-th node is a special node).  
Let $A[0..2n-2]$ be a sequence of length $2n-1$ over an alphabet 
$\{\,0,1,2,3,4\,\}$. Suppose that $u$ is the $i$-th node in $\cT$. Then $A[i]=0$ if no children of $u$ have marked descendants; $A[i]=1$ if exactly one child of $u$ has a marked descendant; $A[i]=2$ if at least two children of $u$ have marked descendants.  We can compute $A[i]$ for the $i$-th node $u$ if the entries of $A$ that contain information about  children of $u$ are already known. Hence we can perform an Euler tour of $\cT$ and compute $A$ in $O(n)$ time. When $A$ is known, we can calculate $M$ and $Sp$: $M[i]=1$ if and only if $A[i]=2$; $Sp[i]=1$ if and only if  $A[i]=1$ and $A[k]=2$, where $j$ is the index of the parent of the $i$-th node.  Hence all marked and special nodes are found in linear time. 

\paragraph{Auxiliary Data Structures on Suffix Tree Topology.}
For every node $u$, we keep special and marked children of $u$  in a dictionary data structure $D_u$. We also keep all children of $u$ in a data structure $F_u$.  If $u$ has at most $d$ children, $F_u$ is organized as the data structure of~\cite{FW94}. If $u$ has more than $d$ children, $F_u$ is organized as  the van Emde Boas data structure.  Using $D_u$, we can find for any symbol $a$ the special or marked child $u_i$ of $u$ such that the edge from $u$ to $u_i$ is labelled with $a$.  Using $F_u$, we can find for any symbol $a$ the child $u_i$ of $u$ such that the edge from $u$ to $u_i$ is labelled with $a$. Since the total number of marked and special nodes is $O(n/d)$, all $D_u$ contain $O(n/d)$ elements.  We can construct $D_u$ in $O(n_u\log n_u)$ time, where $n_u$ is 
the number of marked and special children of $u$. Hence all $D_u$ can be pre-processed in $O((n/d)\log n)=O(n)$ time. 
}

\end{document}